\newtheorem{theorem}{Theorem}
\newtheorem{assumption}{Assumption}
\newtheorem{prop}{Proposition}
\renewcommand{\labelenumi}{(\theenumi)}
\definecolor{orange}{rgb}{1,0.5,0}
\definecolor{green}{rgb}{0,0.5,0}
\renewcommand{\dim}{d}
\newcommand{\cone}{C_1} 
\newcommand{\ctwo}{C_2} 
\newcommand{\cmarkov}{a}
\newcommand{\nr}{S} 
\renewcommand{\nl}{L} 
\renewcommand{\t}{t}
\newcommand{\E}{\mathbb E}
\renewcommand{\P}{\mathbb P}
\newcommand{\sr}[1]{\,{#1}\,}
\newcommand\given{{\,|\,}}
\newcommand\giventh{{\,;\,}}
\newcommand\col{{\hspace{0.6mm}:\hspace{0.6mm}}}
\newcommand\diff{\mathrm d}
\newcommand\pkg[1]{\textbf{#1}}
\providecommand{\NOOP}[1]{} 
\title{Inference on high-dimensional implicit dynamic models using a guided intermediate resampling filter}
\author{Joonha Park}
\affil{Boston University,
  Boston,
  USA}
\author{Edward L. Ionides}
\affil{University of Michigan,
  Ann Arbor,
  USA}
\date{}
\begin{document}

\maketitle

\begin{abstract}
  We propose a method for inference on moderately high-dimensional, nonlinear, non-Gaussian, partially observed Markov process models for which the transition density is not analytically tractable.
  Markov processes with intractable transition densities arise in models defined implicitly by simulation algorithms.
  Widely used particle filter methods are applicable to nonlinear, non-Gaussian models but suffer from the curse of dimensionality.
  Improved scalability is provided by ensemble Kalman filter methods, but these are inappropriate for highly nonlinear and non-Gaussian models.
  We propose a particle filter method having improved practical and theoretical scalability with respect to the model dimension.
  This method is applicable to implicitly defined models having analytically intractable transition densities.
  Our method is developed based on the assumption that the latent process is defined in continuous time and that a simulator of this latent process is available.
  In this method, particles are propagated at intermediate time intervals between observations and are resampled based on a forecast likelihood of future observations.
  We combine this particle filter with parameter estimation methodology to enable likelihood-based inference for highly nonlinear spatiotemporal systems.
 We demonstrate our methodology on a stochastic Lorenz 96 model and a model for the population dynamics of infectious diseases in a network of linked regions.
  
\noindent{\bf Keywords:} sequential Monte Carlo; particle filter; spatiotemporal inference; curse of dimensionality; implicit models; plug-and-play property
\end{abstract}

\let\thefootnote\relax\footnotetext{\emph{Address for correspondence}: Joonha Park, Department of Mathematics and Statistics, Boston University, 111 Cummington Mall, MA 02215, USA.\\ \texttt{Email: joonhap@bu.edu}}

\section{Introduction}

In this paper, we consider inference on highly nonlinear, moderately high-dimensional Markov process models for which evaluation of the transition density is not available.
Data are modeled as partial or noisy measurements of the latent Markov process.
We will first introduce in turn the three model aspects we are concerned with, namely intractable transition densities, high-dimensionality, and nonlinearity.

A model that is defined using a simulator, instead of an analytically tractable characterization, of an underlying process is said to be implicitly defined \citep{diggle1984monte}.
Mechanistic models for complex dynamic systems are sometimes defined implicitly by a computer simulation algorithm, and such models often lack analytically tractable transition densities.
Inference methods that can be used on implicitly defined models are said to possess the \emph{plug-and-play} property \citep{breto2009time, he2009plug}, or alternatively called \emph{equation-free} \citep{kevrekidis2004equation, xiu2005equation} or \emph{likelihood-free} \citep{marjoram03,sisson07}.

Inference on dynamic systems sometimes requires fitting models with high-dimensional latent processes to high-dimensional data.
For example, population dynamics in geographically linked locations are sometimes studied in ecology or epidemiology using partially observed Markov process (POMP) models for which the dimension of both the latent process and the measurement process scale linearly with the number of spatial locations.
In systems biology, models for networks of reactions may add stochasticity to collections of deterministic differential equations \citep{kitano2002computational}. 
The model dimension typically increases with the number of system components, but even state-of-the-art inference methods are often not suitable for application beyond small systems \citep{owen2015scalable}.

Ensemble Kalman filter (EnKF) methods have been used for geophysical models in data assimilation due to their good scalability to high dimensions \citep{houtekamer2001sequential, evensen1994sequential}. 
However, these methods can be ineffective for highly nonlinear and non-Gaussian systems, because they rely on locally linear and Gaussian approximations \citep{ades2015equivalent,lei2010comparison,miller1999data}.
For example, host-pathogen population dynamics of infectious diseases in geographically coupled regions often exhibit a long period of epidemic trough followed by a sharp peak, which may be sparked by an invasion of the pathogen from a different region.
Strong nonlinearity of the dynamics, as well as the inferential relevance of a small, discrete, number of initial infections, makes the use of ensemble Kalman filter methods unsuitable for these models.

We propose an approach for inference on a class of implicitly defined nonlinear partially observed Markov process (POMP) models of moderately high dimensions.
A POMP model, otherwise known as a state space model or a hidden Markov model (HMM), consists of a latent Markov process representing the time evolution of a system and measurement processes describing the randomness of observations at specified time points.
Each obervation $Y_n$ provides a partial or noisy information about the latent process state $X_n$ at time $t_n$.
We denote the density of $Y_n$ given $X_n \sr= x_n$ by $g_n(\cdot \given x_n)$.
A sequence of observations $y_{1:N}$ made at time $t_{1:N}$ are assumed to be given as fixed data.
Sequential Monte Carlo (SMC) methods are recursive algorithms that enable estimation of the likelihood of observed data and the conditional distribution of the latent process given data from a POMP model \citep{doucet2001sequential, cappe2007overview,doucet2011tutorial}.
In the context of POMP models, SMC algorithms are known as particle filters (PFs), and the simulated random variables used by SMC to represent conditional latent processes are called particles.
Particle filter methods are capable of handling highly nonlinear latent processes, but they suffer from rapid deterioration in performance as the model dimension increases \citep{bengtsson2008curse,snyder2008obstacles}.

In order to introduce our method, we briefly review some particle filter methods.
In inference on POMP models, the conditional distribution of $X_n$ given observations $y_{1:n}$, called the filtering distribution at time $t_n$, often makes a distribution of interest.
Particle filters recursively represent the filtering distributions using particle ensembles.
A collection of particles $X^{1:J} := \{ X^j \giventh 1\sr\leq j \sr\leq J\}$ is said to represent a distribution with density $p$ if the sample average $\frac{1}{J} \sum_{j=1}^J f(X^j)$ for a class of functions $f$ gives an estimate of $\int f(x)p(x) \diff x$.
Suppose that an ensemble $\tilde X_n^{1:J}$ represent the filtering density $p(x_n \given y_{1:n})$.
The next filtering density can be expressed as 
\begin{equation}
p(x_{n+1} \given y_{1:n+1}) = \frac{\int p(x_n\given y_{1:n}) p(x_{n+1}\given x_n) g_{n+1}(y_{n+1}\given x_{n+1}) \diff x_n}{\int p(x_n\given y_{1:n}) p(x_{n+1}\given x_n) g_{n+1}(y_{n+1}\given x_{n+1}) \diff x_n \diff x_{n+1}}.
\label{eqn:filter_density_recursion}
\end{equation}
Based on \eqref{eqn:filter_density_recursion}, a particle representation $\tilde X_{n+1}^{1:J}$ for the next filtering density $p(x_{n+1}\given y_{1:n+1})$ can be obtained as follows.
The particle ensemble $\tilde X_n^{1:J}$ can be propagated using the transition kernel $p(x_{n+1}\given x_n)$.
The propagated particles, denoted by $X_{n+1}^{1:J}$, can then be resampled according to weights proportional to $\{g_{n+1}(y_{n+1}\given X_{n+1}^{j})\giventh j\sr\in1\col J\}$.
The resampled particles $\tilde X_{n+1}^{1:J}$ represent $p(x_{n+1}\given y_{1:n+1})$.
The method of recursively updating the particle ensemble in this way is called the bootstrap particle filter \citep{gordon1993novel}.
The resampling can be carried out by, for example, taking $\tilde X_{n+1}^{k} := X_{n+1}^{a_k}$, where $a_k$, $k\sr\in 1\col J$, are independent and $\P(a_k \sr= i) = \frac{g_{n+1}(y_{n+1}\given X_{n+1}^i)}{\sum_{j=1}^J g_{n+1}(y_{n+1}\given X_{n+1}^j)}$.
Alternative methods of resampling may be preferable \citep{douc05}.

Another method for obtaining a particle representation of the next filtering density is based on the equation
\[
p(x_{n+1}\given y_{1:n+1}) = \frac{\int p(x_n\given y_{1:n}) p(x_{n+1}\given x_n,y_{n+1}) p(y_{n+1}\given x_n) \diff x_n}{\int p(x_n\given y_{1:n}) p(x_{n+1}\given x_n,y_{n+1}) p(y_{n+1}\given x_n) \diff x_n \diff x_{n+1}}.
\]
Suppose now that $X_n^{1:J}$ represent the distribution $p(x_n\given y_{1:n})$.
Resampling according to weights proportional to $p_{Y_{n+1}|X_n}(y_{n+1}\given X_n^j)$ leads to particles denoted by $\tilde X_n^{1:J}$, representing $p(x_n\given y_{1:n+1})$.
Propagating $\tilde X_n^{1:J}$ with the kernel $p(x_{n+1}\given x_n, y_{n+1})$, one obtains a particle representation $X_{n+1}^{1:J}$ of $p(x_{n+1}\given y_{1:n+1})$.
This method corresponds to the fully adapted auxiliary particle filter (APF) \citep{pitt1999filtering}.
The propagation kernel $p(x_{n+1}\given x_n, y_{n+1})$ in this context is called \emph{adapted} to $y_{n+1}$, because it uses the information in the next observation.
A method equivalent to the fully adapted APF, in which particles are resampled according to weights proportional to $p(y_{n+1}\given x_n)$ and propagated with the adapted kernel $p(x_{n+1}\given x_n, y_{n+1})$ has been considered by \citet{kong1994sequential}, \citet{liu1995blind}, and \citet{chen2000adaptive}.
The auxiliary particle filter by \citet{pitt1999filtering} uses $g_{n+1}(y_{n+1}\given \bar \xi_{n+1}(x_n))$ as an approximation to $p(y_{n+1}\given x_n)$, where $\bar \xi_{n+1}(x_n)$ is a point that can represent the conditional distribution $p_{X_{n+1}|X_n}(\cdot \given x_n)$, such as the mean of the distribution $p_{X_{n+1}|X_n}(\cdot \given x_n)$ or an approximation thereof.
\citet{doucet2000sequential} called the propagation of $\tilde X_n^{1:J}$ by the adapted kernel $p(x_{n+1}\given x_n,y_{n+1})$ optimal when only the next observation is available, since the particles $X_{n+1}^{1:J}$ having \emph{equal} weights represent $p(x_{n+1}\given y_{1:n+1})$. 
An advantage of the fully adapted APF compared to the bootstrap PF is that the coefficient of variation of the resampling weights is smaller: we have
\[
\text{Var}\big[p_{Y_{n+1}|X_n}(y_{n+1}\given X_n^\text{APF})\big] \leq \text{Var}\big[g_{n+1}(y_{n+1}\given X_{n+1}^\text{BPF})\big],
\]
and
\[
\E \big[p_{Y_{n+1}|X_n}(y_{n+1}\given X_n^\text{APF})\big] = \E\big[g_{n+1}(y_{n+1}\given X_{n+1}^\text{BPF})\big],
\]
if $X_n^\text{APF}$ is a draw from $p(x_n\given y_{1:n})$ and $X_{n+1}^\text{BPF}$ is a draw from $p(x_{n+1} \given y_{1:n})$.
However, as \citet{snyder2015performance} showed using counterexamples, even the fully adapted APF suffers from rapid deterioration of performance as the latent process and measurement dimension increases, because $\text{Var}\big[p_{Y_{n+1}|X_n}(y_{n+1}\given X_n^\text{APF})\big]$ scales exponentially.

The fully adapted APF can be viewed as a particle filter operating on a twisted POMP model \citep{whiteley2014twisted, guarniero2017iterated}.
A twisted POMP model is defined based on a given POMP model $\{(X_n,Y_n)\giventh n\sr\in 1\col N\}$ and a sequence of functions $\psi := \{\psi_n\giventh n\sr\in 1\col N\}$.
Denoting $\tilde \psi_n(x_n) := \int \psi_{n+1}(x_{n+1}) p(x_{n+1}\given x_n) \diff x_{n+1}$, we consider a sequence of densities
\[
p^{\psi}(x_n \giventh y_{1:n}) := \frac{p(x_n\given y_{1:n}) \tilde \psi_n(x_n)}{\int p(x_n\given y_{1:n}) \tilde \psi_n(x_n) \diff x_n}, \quad n \sr\in 1\col N,
\]
where $\tilde \psi_N \equiv 1$.
Algorithm~\ref{alg:twisted_pf} is motivated by the recursive relation
\begin{equation}
p^{\psi}(x_{n+1}\giventh y_{1:n+1}) \propto \int p^{\psi}(x_n \giventh y_{1:n}) \cdot \frac{p(x_{n+1}\given x_n) \psi_{n+1}(x_{n+1})}{\tilde \psi_n(x_n)} \cdot g_{n+1}(y_{n+1}\given x_{n+1}) \frac{\tilde \psi_{n+1}(x_{n+1})}{\psi_{n+1}(x_{n+1})} \diff x_n.
\label{eqn:twisted_recursion}
\end{equation}
\begin{figure}[h!]
  \centering
  \scalebox{1}{\begin{minipage}{.92\textwidth}
\begin{algorithm}[H]
For a particle ensemble $\tilde X_n^{1:J}$ that represents $p^\psi(x_n\giventh y_{1:n})$,
\begin{enumerate}
\item propagate $\tilde X_n^{1:J}$ using a kernel that is adapted with respect to $\psi_{n+1}$,
\begin{equation*}
f_{n+1}^\psi(x_{n+1} \giventh x_n) := \frac{p(x_{n+1}\given x_n) \psi_{n+1}(x_{n+1})}{\tilde \psi_n(x_n)},
\label{eqn:adaptedKernel}\end{equation*}
\item resample the propagated particles $X_{n+1}^{1:J}$ according to weights proportional to $g_{n+1}^\psi(X_{n+1}^j)$, where
\[
g_{n+1}^\psi(x_{n+1}) := g_{n+1}(y_{n+1}\given x_{n+1}) \frac{\tilde \psi_{n+1}(x_{n+1})}{\psi_{n+1}(x_{n+1})}.
\]
\end{enumerate}
\caption{Particle filter on a twisted model}
\label{alg:twisted_pf}
\end{algorithm}
  \end{minipage}}
  \end{figure}

\noindent The case of $\psi_n(x_n) \equiv g_n(y_n\given x_n)$ corresponds to the fully adapted APF.
The variances of the resampling weights are minimized when $\psi_n(x_n) = p(y_{n:N}\given x_n)$, the forecast likelihood of all future observations \citep{guarniero2017iterated}.
In this case, no resampling is necessary, because $g_n(y_n\given x_n) \frac{\tilde \psi(x_n)}{\psi(x_n)} = g_n(y_n\given x_n) \frac{p(y_{n+1:N}\given x_n)}{p(y_{n:N}\given x_n)} \equiv 1$ for all $n$.
This ideal case targets the smoothing distribution, that is
\[
p^\psi(x_n\giventh y_{1:n}) = p(x_n\given y_{1:N}).
\]
More accessible than the ideal case is the choice $\psi_n(x_n) = p(y_{n:n+\nl-1}\given x_n)$ for some $\nl \sr\geq 2$.
The particle filter corresponding to this case looks ahead to $\nl$ observations in the future.
Looking ahead for the information in future observations can lead to robust filtering estimates with regard to outliers in observed data \citep{lin2013lookahead}.
The target density for this $\psi$ is given by
\[
p^\psi(x_n \giventh y_{1:n}) = p(x_n \given y_{1:n+\nl}),
\]
which is known as the fixed lag smoothing distribution \citep{clapp1999fixed}.
In this $\nl$-lookahead approach, particles are propagated according to
\[
f^\psi_{n+1}(x_{n+1}\giventh x_n) = \frac{p(x_{n+1} \given x_n) p(y_{n+1:n+L} \given x_{n+1})}{p(y_{n+1:n+L} \given x_n)} = p(x_{n+1}\given x_n, y_{n+1:n+\nl}),
\]
and the resampling weights for the propagated particles $X_{n+1}^{1:J}$ are given by
\begin{equation}
  \begin{split}
    g_{n+1}(y_{n+1}\given X_{n+1}^j) \frac{\tilde \psi_{n+1}(X_{n+1}^j)}{\psi_{n+1}(X_{n+1}^j)} &= g_{n+1}(y_{n+1}\given X_{n+1}^j) \frac{p(y_{n+2:n+\nl+1}\given X_{n+1}^j)}{p(y_{n+1:n+\nl}\given X_{n+1}^j)}\\
    &= p(y_{n+\nl+1}\given X_{n+1}^j, y_{n+2:n+\nl}).
  \end{split}
\label{eqn:resample_weight_lookahead}
\end{equation}
Particle filtering on this twisted model corresponds to an adapted version of the block sampling method by \citet{doucet2006efficient} when we look marginally at $X_{n+1}$.
The block sampling method updates a block of latent process states $X_{n+1:n+\nl}$ based on the observations $y_{n+1:n+\nl}$.

The coefficient of variation of resampling weights \eqref{eqn:resample_weight_lookahead} decreases as $\nl$ increases.
If we denote $v_\nl:=\text{Var}\left[ p(y_n\given X_{n-\nl}^j, y_{n-\nl+1:n-1}) \right]$ and $e_\nl:=\E\left[ p(y_n \given X_{n-\nl}^j, y_{n-\nl+1:n-1}) \right]$ where $X_{n-\nl}^j \sim p(x_{n-\nl}\given y_{1:n-1})$ for $1 \sr\leq \nl \sr\leq n{-}1$, then we have $v_{\nl_1} \leq v_{\nl_2}$ and $e_{\nl_1}=e_{\nl_2}$ for $\nl_1 >\nl_2$.
\citet{doucet2006efficient} considered an example where the variance $v_\nl$ decreases exponentially with $\nl$.

The above $\nl$-lookahead approach requires that one can evaluate $\psi_n(x_n) = p(y_{n:n+\nl-1}\given x_n)$ and $\tilde \psi_n(x_n) = p(y_{n+1:n+\nl}\given x_n)$ and can sample from the adapted kernel $p(x_{n+1}\given x_n, y_{n+1:n+\nl})$.
When these requirements cannot be met, an approximate method can be used.
We denote an approximation to $p(y_{n:n+\nl-1}\given x_n)$ by $\psi_n(x_n)$, an approximation to $\tilde \psi_n(x_n)=\int \psi_{n+1}(x_{n+1}) p(x_{n+1}|x_n) \diff x_{n+1}$ by $r_n(x_n)$, and a kernel approximating $f_{n+1}^\psi(x_{n+1}\giventh x_n)$ by $q_{n+1}(x_{n+1}\giventh x_n)$.
A recursive relation
\begin{multline}
  p(x_{n+1}\given y_{1:n+1}) r_{n+1}(x_{n+1}) \\
  \propto \int p(x_n\given y_{1:n}) r_n(x_n) \cdot q_{n+1}(x_{n+1}\giventh x_n) \cdot g_{n+1}(y_{n+1}\given x_{n+1}) \frac{r_{n+1}(x_{n+1})}{r_n(x_n)} \frac{p(x_{n+1}\given x_n)}{q_{n+1}(x_{n+1}\giventh x_n)} \diff x_n,
  \label{eqn:approximate_L_lookahead_recursion}
\end{multline}
which is analogous to \eqref{eqn:twisted_recursion}, motivates an approximate $\nl$-lookahead filter, shown in Algorithm~\ref{alg:approx_L_lookahead}.
\begin{figure}[h!]
  \centering
  \scalebox{1}{\begin{minipage}{.92\textwidth}
\begin{algorithm}[H]
For $\tilde X_n^{1:J}$ that represent the density proportional to $p(x_n\given y_{1:n}) r_n(x_n)$,
\begin{enumerate}
  \item propagate $\tilde X_n^{1:J}$ using the kernel $q_{n+1}(x_{n+1}\giventh x_n)$, and
  \item resample the propagated particles $X_{n+1}^{1:J}$ according to weights proportional to
\begin{gather}
g_{n+1}(y_{n+1}\given X_{n+1}^j) \frac{r_{n+1}(X_{n+1}^j)}{r_n(\tilde X_n^j)} \frac{p(X_{n+1}^j\given \tilde X_n^j)}{q_{n+1}(X_{n+1}^j\giventh \tilde X_n^j)}.
\label{eqn:resample_weight_approx_lookahead}
\end{gather}
\end{enumerate}
The resampled particles $\tilde X_{n+1}^{1:J}$ represent density proportional to $p(x_{n+1}\given y_{1:n+1}) r_{n+1}(x_{n+1})$.
\caption{An approximate $\nl$-lookahead filter}
\label{alg:approx_L_lookahead}
\end{algorithm}
  \end{minipage}}
  \end{figure}

\noindent In this approximate $\nl$-lookahead approach, the function $\psi_n$ indirectly affects the algorithm via $r_n$ and $q_n$.
The propagation kernel $q_{n+1}(x_{n+1}\giventh x_n)$ approximates
\[
q_{n+1}(x_{n+1}\giventh x_n) \approx \frac{p(x_{n+1}\given x_n) p(y_{n+1:n+L} \given x_{n+1})}{p(y_{n+1:n+L}\given x_n)}
= \frac{p(x_{n+1}, y_{n+1:n+L}\given x_n)}{p(y_{n+1:n+L} \given x_n)}
= p(x_{n+1}\given x_n, y_{n+1:n+L}),
\]
and the resampling weights \eqref{eqn:resample_weight_approx_lookahead} approximates \eqref{eqn:resample_weight_lookahead}.
We note that the resampling weights \eqref{eqn:resample_weight_approx_lookahead} depend on the density $p(x_{n+1}\given x_n)$, so Algorithm~\ref{alg:approx_L_lookahead} cannot be used when the transition density of the latent Markov process is not evaluable.
An exception is when we use $q_{n+1}(x_{n+1}\giventh x_n) = p(x_{n+1}\given x_n)$.
In this case, the resampling weights are given by
\begin{equation}\begin{split}
  \frac{g_{n+1}(y_{n+1}\given X_{n+1}^j)r_{n+1}(X_{n+1}^j)}{r_n(\tilde X_n^j)}
  &\approx \frac{g_{n+1}(y_{n+1}\given X_{n+1}^j) p(y_{n+2:n+\nl+1}\given X_{n+1}^j)}{p(y_{n+1:n+\nl}\given \tilde X_n^j)}\\
  &= \frac{p(y_{n+1:n+\nl+1}\given X_{n+1}^j)}{p(y_{n+1:n+\nl}\given \tilde X_n^j)}.
  \label{eqn:implicit_twisted_resweight}
  \end{split}\end{equation}
However, in this case the variance of the resampling weights becomes too large even for moderate dimensional models, because the weights are lower bounded by
\[
\text{Var} \left(\frac{p(y_{n+1:n+\nl+1}\given X_{n+1}^j)}{p(y_{n+1:n+\nl}\given\tilde X_n^j)} \right)
\geq \E \left[ \frac{\text{Var}\big[ p(y_{n+1:n+\nl+1}\given X_{n+1}^j) \,\big|\, \tilde X_n^j \big] } {p(y_{n+1:n+\nl}\given \tilde X_n^j)^2} \right],
\]
and $\text{Var}\big[ p(y_{n+1:n+\nl+1}\given X_{n+1}^j) \,\big|\, \tilde X_n^j \big]$ grows exponentially quickly with increasing latent process and measurement dimension.

In this paper, we propose a method that (a) uses the simulator of the latent process for particle propagation and (b) has favorable scaling with respect to increasing dimension.
For this method, $X_n^{1:J}$ represents the density proportional to $p(x_n\given y_{1:n-1}) \psi_n(x_n)$ and the subsequent particle ensemble $X_{n+1}^{1:J}$ represents the density proportional to $p(x_{n+1}\given y_{1:n}) \psi_{n+1}(x_{n+1})$.
Here, $\psi_n(x_n)$ is an approximation to $p(y_{n:n+\nl-1}\given x_n)$.
This method uses intermediate propagation and resampling steps, as described below.
We assume that the latent Markov process is defined in continuous time.
We further assume that the latent process, denoted by $\{X(t)\}$, can be simulated for any length of time.
A connection to the discrete time process $\{X_n\}$ can be made by understanding $X_n := X(t_n)$ where $t_n$, $n\sr\in 1\col N$, denote the observations times.
In the continuous-time context, dummy variables for the latent process will be indexed by time (e.g., $x_{t_n}$).
We consider intermediate time points $t_{n,1} < t_{n,2} < \cdots < t_{n,S-1}$ between $t_n$ and $t_{n+1}$.
We will denote $t_{n,0}:=t_n$ and $t_{n,\nr}:=t_{n+1}$.
For each intermediate time point $t_{n,s}$, $s\sr\in 1\col \nr$, we define $\psi_{t_{n,s}}(x_{t_{n,s}})$ to be an approximation to $p(y_{n+1:n+\nl}\given x_{t_{n,s}})$.
We call $\psi_{t_{n,s}}$ the guide function at $t_{n,s}$.
At $t_n\sr=t_{n-1,\nr}$, $\psi_{t_n}(x_{t_n})$ approximates $p(y_{n:n+\nl-1}\given x_{t_n})$.
We call our method a guided intermediate resampling filter (GIRF), and it works as follows.
For $s\sr\in 1\col \nr{-}1$, suppose that $\tilde X_{t_{n,s}}^{1:J}$ represent the density proportional to $p(x_{t_{n,s}}\given y_{1:n}) \psi_{t_{n,s}}(x_{t_{n,s}})$.
These particles are propagated using the simulator of $p(x_{t_{n,s+1}} \given x_{t_{n,s}})$.
The propagated particles, denoted by $X_{t_{n,s+1}}^{1:J}$, are resampled according to weights proportional to
\[
\frac{\psi_{t_{n,s+1}}(X_{t_{n,s+1}}^j)}{\psi_{t_{n,s}}(\tilde X_{t_{n,s}}^j)}.
\]
The resampled particles represent the density proportional to $p(x_{t_{n,s+1}}\given y_{1:n}) \psi_{t_{n,s+1}}(x_{t_{n,s+1}})$ and are denoted by $\tilde X_{t_{n,s+1}}^{1:J}$.
For $s\sr=0$, the particles $\tilde X_{t_n}^{1:J}$ representing $p(x_{t_n} \given y_{1:n-1}) \psi_{t_n}(x_{t_n})$ are propagated with $p(x_{t_{n,1}} \given x_{t_n})$, and the propagated particles $X_{t_{n,1}}^j$ are resampled according to weights proportional to 
\[
g_n(y_n\given \tilde X_{t_n}^j) \cdot \frac{\psi_{t_{n,1}}(X_{t_{n,1}}^j)}{\psi_{t_n}(\tilde X_{t_n}^j)}.
\]
The resampled particles $\tilde X_{t_{n,1}}^{1:J}$ represent $p(x_{t_{n,1}} \given y_{1:n}) \psi_{t_{n,1}}(x_{t_{n,1}})$.

This method combines the $\nl$-lookahead approach with the intermediate propagation and resampling approach of \citet{delmoral2015sequential}.
The intermediate propagation and resampling method was considered by \citet{delmoral2015sequential} mostly in the context of Markov processes with highly informative observations, such as precisely observed diffusion processes.
Both precisely observed diffusion processes and high-dimensional Markov processes with high-dimensional measurements share the property that each observation carries a lot of information, in the sense that the variance of the measurement density $g_n(y_n \given X_{t_n})$ with respect to $X_{t_n}\sim p(x_{t_n} \given y_{1:n-1})$ is large.
However, in the case of precisely observed low dimensional diffusion processes, the observation $y_n$ can sufficiently localize $X_{t_n}$, whereas it is often not the case for high-dimensional Markov processes with high-dimensional measurements.
The $\nl$-lookahead strategy in the GIRF helps localize particles.
Also, compared to the $\nl$-lookahead method without the intermediate propagation and resampling (i.e., the case of $\nr\sr=1$ in the method above), the variance of the resampling weights in the combined approach tends to be much smaller.
Intermediate resampling enables the method to use particles more efficiently by focusing on regions in the state space of the latent process that are consistent with future observations.


We consider the case where the dimension of the latent process and the measurement dimension grow linearly with each other.
In this case, the number of intermediate steps $\nr$ can also be chosen to scale linearly with the increasing dimension for good performance.
We show in Theorem~\ref{thm:bound_exact_guide} that, when we can take $\psi_{t_{n,s}}(x_{t_{n,s}})$, $n\sr\in 0\col N{-}1$, $s\sr\in 1\col\nr$ to be the exact forecast likelihood $p(y_{n+1:N}\given x_{t_{n,s}})$, a bound on the Monte Carlo (MC) error in the estimate $\frac{1}{J}\sum_{j=1}^J f(\tilde X_{t_N}^{1:J})$ of $\E[ f(X_{t_N}) \given y_{1:N}]$ scales at a polynomial rate with respect to the dimension $d$ under certain circumstances, if we take $\nr=d$.
Due to the $\frac{1}{\sqrt J}$ scaling rate of the MC error with respect to the particle size $J$, the number of particles required to obtain filtering results of desired accuracy also scales at a polynomial rate.
In contrast, if intermediate propagation and resampling is not carried out, the number of particles required for a given accuracy typically scales at an exponential rate with respect to $d$ even when the exact forecast likelihoods are available \citep{snyder2008obstacles, snyder2015performance}.

Theorem~\ref{thm:main} explains a relationship between the quality of the approximation $\psi_{t_{n,s}}(x_{t_{n,s}})$ of $p(y_{n+1:N}\given x_{t_n,s})$ and the magnitude of the MC error.
When the approximation of the forecast likelihood is not exact, a multiplicative factor in our bound on the MC error in Theorem~\ref{thm:main} scales exponentially with $d$.
This exponentially scaling factor explains a fundamental limitation in high-dimensional filtering.
If there are only a few particles that are consistent with future observations and they are lost in earlier time steps, the accuracies of the sequential particle representations are damaged, until the effect of the lost particles are diluted due to the mixing of the latent process conditional on observed data.
Nevertheless, reasonably chosen approximation $\psi_{t_{n,s}}$ can make the multiplicative factor on the bound on the MC error due to the inaccurate approximation of the forecast likelihoods increase at a slow exponential rate.
Together with another multiplicative factor that scales polynomially when the number of intermediate propagation and resampling steps $\nr$ is set equal to $d$, the number of particles required for a desired accuracy can scale at a slow exponential rate.
Our empirical results support that the combination of intermediate propagation and resampling approach and reasonable approximation to forecast likelihoods can significantly extend the dimensionality of the model that is practically accessible.
Our theoretical results give a practical suggestion that the number of intermediate steps for propagation and resampling can be chosen equal to the latent process and measurement dimension $d$.

Based on this guided intermediate resampling approach, we also propose a parameter inference method for implicitly defined, moderately high-dimensional POMP models.
Particle filter methods can give unbiased likelihood estimates of the observed data \citep{delmoral2004feynman}.
In high dimensions, the MC errors in the likelihood estimates are typically very high.
Nonetheless, the noisy estimates can contain useful information about the parameter.
We use the noisy estimates of profile log likelihoods to make an approximate inference for the parameter of interest by taking into account the MC errors in those estimates.

\subsection{Related work}\label{sec:othermethods}
We review some of earlier works related to the problem we consider and the method we propose.
Since our approach uses lookahead methods for high-dimensional filtering, we describe the separate literatures on these two topics.

\emph{1. High-dimensional filtering.}
High-dimensional filtering problems naturally occur when POMP models for spatiotemporal systems are considered.
There is a class of local particle filter approaches that use an approximation based on the assumption that the correlation between spatial units decay as the distance between them increases \citep{farchi2018comparison}.
These approaches build upon partitioning of the latent variables into blocks and approximating the one step transitions of the latent process as being independent between the blocks.
\citet{rebeschini2015can} developed a theoretical bound for the filtering error, which only depends on the size of the largest block but not on the entire space dimension.
Despite this very desirable scaling property, this approach has some practical limitations, because it is not applicable to highly interdependent spatial models, and the filter estimates are not reliable near the boundaries of the blocks.
We note that our GIRF approach does not rely on any assumption on spatial structure and does not suffer from any boundary effects.

Localized data assimilation has also been used with ensemble Kalman filter methods \citep{hunt2007efficient}.
Local ensemble Kalman filter methods use only the observations made near a spatial unit when updating the latent state distribution for that unit.
Local implementation can be crucial for the numerical stability of the EnKF, where the number of the particles used $J$ is smaller than the dimension of the model $d$.
We note that particle filters are used for nonlinear, non-Gaussian models that are much lower-dimensional than the locally near-Gaussian models for which local Ensemble Kalman filter methods are used.
In this case, the challenge is given by the fact that $J \ll e^d$ but not $J < d$.
Local EnKF methods often inflate the one-step forecast variance or the measurement variance or both, as an additional means to ensure numerical stability and to guard against model misspecification \citep{hunt2007efficient}.
The localized filtering approach has also been combined with the ensemble transport method by \citet{reich2013nonparametric} in \citet{cheng2015assimilating} and \citet{acevedo2017second}.

\citet*{beskos2014stability} and \citet*{beskos2014error} theoretically investigated the approach of using the annealed importance sampling method by \citet{neal2001annealed} for particle filtering in high dimensions.
The annealed importance sampling method introduces a series of bridging distributions between observations. 
These bridging densities are usually set proportional to a fractional power of the desired target density.
Between two adjacent importance resampling, the particles are transformed according to a transition kernel whose stationary distribution equals the target bridging distribution.
These transition kernels provide mixing that helps maintain the stability of the particle approximations.
The authors gave stability results for the case where the original high-dimensional latent process is composed of many copies of independent and identically distributed (IID) one dimensional processes and the number of bridging steps is equal to the space dimension.
In particular, \citet*{beskos2014stability} showed that the annealed importance weights are non-degenerate as the dimension goes to infinity even with fixed particle size.
\citet*{beskos2014error} showed that both the $L^2$ error of the filter estimates and the variance of the corresponding likelihood estimates are bounded uniformly in the space dimension.
Their approach and our GIRF method have a similarity in the use of intermediate propagation and resampling steps and in the fact that the number of intermediate steps is equal to the space dimension.
However, their approach is not applicable to implicitly defined models because analytically tractable transition densities are required.

\citet{beskos2017stable} studied a high-dimensional filtering algorithm in the case where the spatial structure of the model can be hierarchically factorized.
Specifically, they assumed that the one step transition density is given, or can be well approximated, by a product of functions of increasing collections of latent variables.
The theoretical results they obtained by considering a few simple IID cases show that filtering can be stable when the number of particles increases linearly with the space dimension.
These promising results provide insights into what might be achieved in more general cases. 

\emph{2. The lookahead approach.}
A number of particle filter methods proposed in the literature use the information provided by future observations in order to obtain stable filtering estimates.
These methods include the auxiliary particle filter by \citet{pitt1999filtering} and the block sampling method by \citet{doucet2006efficient}.
\citet{lin2013lookahead} reviewed various lookahead strategies.
\citet{johansen2015blocks} proposed a method based on both the block sampling idea and the annealed importance sampling approach.

The resampling weights in lookahead methods are closely related to approximations to forecast likelihoods.
\citet{lin2010generating} proposed a method for estimating the optimal resampling weights using backward pilots, in an intermediate propagation and resampling approach for perfectly observed diffusion processes.
\citet{guarniero2017iterated} proposed a method for estimating the exact guide function $\psi^*_n(x_n) = p(y_{n:N}\given x_n)$ in a backward direction $n\sr= N, N{-}1, \dots, 1$, using parametric fitting to mixtures of normals.
The filtering on the $\psi$-twisted models can be iterated to obtain $\psi_n$ functions that gradually approach $\psi^*_n$.
Both of these backward approaches for estimating forecast likelihoods require analytically tractable transition densities of the latent Markov process.
In the current paper, we propose a forward-simulation method for approximating the guide function that does not require transition densities to be evaluated.

Serveral works in the literature developed concrete methods for propagating particles according to the adapted kernel $p(x_{n+1}\given y_{n+1}, x_n)$ in an approximate manner.
The implicit particle filter approximates the optimal kernel by directly sampling particles at the vicinity of the maximum of the optimal kernel density \citep{chorin2009implicit, morzfeld2012random, chorin2013survey}.
The equivalent-weights particle filter nudges particles toward the next observation over intermediate time steps \citep{van2010nonlinear, ades2015equivalent}; 
it was developed for applications in geosciences and is based on local Gaussianity of the transition density and the Gaussian measurement density.
\citet{papadakis2010data} proposed the use of the ensemble Kalman filter updates as a propagation kernel within a particle filter.
\citet{bunch2016approximations} proposed an algorithm that moves particles according to a Gaussian flow that approximates the optimal kernel density.
The aforementioned methods assume that the transition density is either known or locally Gaussian.

\subsection{Summary of contributions}
We summarize the contributions of our paper as follows.
\begin{itemize}
\item We develop a particle filtering algorithm for moderately high-dimensional, nonlinear, non-Gaussian, implicitly defined, partially observed Markov process models.
  In particular, the algorithm can be used for models where the latent Markov process has intractable transition density.
  We demonstrate that our guided intermediate resampling filter (GIRF, Algorithm~\ref{alg:GIRF}) can be used to enable likelihood-based inference in this class of models.
  As an example, we make inference for the spatiotemporal coupling parameter in a mechanistic, coupled Markov jump process model describing the metapopulation dynamics of infectious disease (Figure~\ref{fig:prof_lik_G}).
\item We propose approaches to constructing a guide function using forward simulations (Section~\ref{sec:guide}).
  A guide function approximates the forecast likelihood of future observations.
  The choice of the guide function does not affect the asymptotic consistency of the GIRF algorithm, but does influence its scaling rate as the model dimension increases.
\item We develop theoretical results for the GIRF algorithm, including a finite-sample bound on the Monte Carlo filtering error (Theorem~\ref{thm:main}).
  These results explain how the Monte Carlo filtering error is influenced by various factors such as the model dimension, the guide function, and the temporal mixing of the latent process conditioned on the observed data.
  Our results offer insights into why our GIRF algorithm scales more favorably than other particle filter methods that do not emply intermediate propagation and resampling.
\end{itemize}

\subsection{Organization of the paper}
This paper is organized as follows.
Section~\ref{sec:method} explains our intermediate propagation and resampling approach.
Section~\ref{sec:impl} gives empirical results on scaling of the algorithm.
Section~\ref{sec:theory} gives theoretical results.
Section~\ref{sec:IF2} describes a parameter estimation procedure that combines the guided intermediate resampling approach with the iterated filtering scheme of \citet{ionides2015inference}.
Section~\ref{sec:discussion} is a concluding discussion.


\section{The guided intermediate resampling filter (GIRF)}\label{sec:method}
\begin{table}
  \begin{tabular}{llc}
    \toprule
    Notation & Description & Section\\\midrule
    $X_t$, $t \sr\geq t_0$ & continuous-time latent process & \ref{sec:method}\\
    $y_n$, $n\sr\in 1\col N$ & partial observation at time $t_n$ & \ref{sec:method}\\
    $g_n(y_n\given X_{t_n})$, $n\sr\in 1\col N$ & measurement density at $y_n$ given $X_{t_n}$ & \ref{sec:method}\\
    $t_{n,s}$, $s\sr\in 1\col S{-}1$ & intermediate time points between $t_n$ and $t_{n+1}$ & \ref{sec:method}\\
    $\tilde X_{t_{n,s}}^j$, $j\sr\in 1\col J$ & filtered particles at $t_{n,s}$ & \ref{sec:method}\\
    $X_{t_{n,s}}^j$, $j\sr\in 1\col J$ & propagated particles at $t_{n,s}$ & \ref{sec:method}\\
    $\hat\ell$ & likelihood estimate from Algorithm~\ref{alg:GIRF} & \ref{sec:method}\\
    $\psi_{t_{n,s}}$ & guide function at $t_{n,s}$ & \ref{sec:method}\\
    $w_{t_{n,s}}(X_{t_{n,s}}^j, \tilde X_{t_{n,s-1}}^j)$ & resampling weight for $X_{t_{n,s}}^j$ & \ref{sec:method}\\
    $L$ & number of lookahead observations used by the guide function & \ref{sec:method}\\
    $\psi(x; t_{n,s}\sr\to t_{n+b})$ & an approximation to the forecast likelihood $p(y_{n+b}|X_{t_{n,s}}\sr=x)$ & \ref{sec:guide}\\
    $\xi(x; t_{n,s}\sr\to t_{n+b})$ & guide simulation from $X_{t_{n,s}}\sr=x$ to $t_{n+b}$ & \ref{sec:guide}\\
    $\bar\xi(x; t_{n,s}\sr\to t_{n+b})$ & deterministic simulation from $X_{t_{n,s}}\sr=x$ to $t_{n+b}$ & \ref{sec:guide}\\
    $\Theta_{t_{n,s}}^{m,j}$, $j\sr\in 1\col J$ & perturbed parameters at $t_{n,s}$ in the $m$-th iteration of Algorithm~\ref{alg:iGIRF} & \ref{sec:IF2}\\
    $\tilde \Theta_{t_{n,s}}^{m,j}$, $j\sr\in 1\col J$ & filtered parameters at $t_{n,s}$ in the $m$-th iteration of Algorithm~\ref{alg:iGIRF} & \ref{sec:IF2}\\\bottomrule
  \end{tabular}
  \caption{Notation used in the paper and the section each symbol is defined.}
  \label{tab:notation}
\end{table}

We denote the latent continuous-time Markov process model by $\left\{X_t\giventh t\sr\geq t_0 \right\}$, where each random variable $X_t$ takes value in a measurable space $(\mathbb X, \mathcal X)$. 
The measurement process is defined at discrete time points $t_n\sr>t_0$, $n\sr\in 1\col N$ and yields an observation $Y_n \in \mathbb Y$ that is a noisy or incomplete measurement of $X_{t_n}$.
The measurement $Y_n$ is independent of other observations $Y_m$, $m\sr\neq n$, and of the latent process $\{X_t\}$, given the current state $X_{t_n}$. 
The measurement process for $Y_n$ conditioned on $X_{t_n} = x_{t_n}$ is assumed to have density $g_n(\,\cdot \given x_{t_n})$.
We will assume that the latent process space and the measurement space are $d$-dimensional, $\mathbb X = \prod_{i=1}^d \mathbb X^{[i]}$, $\mathbb Y = \prod_{i=1}^d \mathbb Y^{[i]}$.
We will study the scaling property of our method with respect to $d$.
The observations $Y_n=y_n$ for $n \sr\in 1\col N$ are assumed to be fixed data.
In what follows, we assume that the transition kernel of the latent process can be simulated, but we do not require its density to be evaluated.

\begin{figure}[t]
  \centering
  \scalebox{1}{\begin{minipage}{\textwidth}
\begin{algorithm}[H]
  \SetKwInOut{Input}{Input}\SetKwInOut{Output}{Output}
  \Input{data, $y_{1:N}$; 
  simulator for $p_{X_{t_0}}$;
  simulator for $p_{X_{t_{n,s}}\given X_{t_{n,s-1}}}$;
  evaluator for the measurement density, $g_n(y_n\given  x_{t_n})$;
  evaluator for the guide function, $\psi_{t_{n,s}}(x_{t_{n,s}})$;
  number of particles, $J$}
  \vspace{1ex}
  \Output{filtered particle swarm, $\tilde X_{t_N}^{1:J}$;
    likelihood estimate, $\hat{\ell}$}
  \vspace{1ex}
  \textbf{Initialize:} $\hat{\ell} \gets 1$, $\tilde X_{t_0}^j \sim p_{X_{t_0}}(\cdot)$ for $j\in 1\col J$
  
  \For {$n\gets 0\col N{-}1$}{
    \For {$s\gets 1\col \nr$} {
      $\displaystyle X_{t_{n,s}}^j \sim p_{X_{t_{n,s}} \given X_{t_{n,s-1}}}(\cdot \given \tilde X_{t_{n,s-1}}^j)$ for $j\sr\in 1\col J$\\
      $w^j \gets w_{t_{n,s}}(X_{t_{n,s}}^j, \tilde X_{t_{n,s-1}}^j)$ given by equation~\eqref{eqn:wfunc} for $j\sr\in 1\col J$\\
      $\hat{\ell} \gets \hat{\ell} \times (\sum_{j=1}^J w^{j}) \big/ J$\\
      Draw $a^j$ with $\mathbb{P}\left(a^j=i\right) = {w^i}\big/{\sum_{i'=1}^J w^{i'}}$ for $j\sr\in 1\col J$\\
      Set $\displaystyle \tilde X_{t_{n,s}}^j = X_{t_{n,s}}^{a^j}$
    }
  }
\caption{A guided intermediate resampling filter (GIRF)}
\label{alg:GIRF}
\end{algorithm}
  \end{minipage}}
  \end{figure}

Pseudocode for our guided intermediate resampling filter (GIRF) is given in Algorithm~\ref{alg:GIRF}.
The intermediate time points between $t_n$ and $t_{n+1}$ will be denoted by $t_{n,s}$, $s\sr\in 1\col \nr{-}1$, and we write $t_{n,0} \sr= t_n$ and $t_{n,\nr} \sr= t_{n+1}$.
The collection of \emph{filtered particles}, $\tilde X^{1:J}_{t_{n,s}}$, provide a Monte Carlo representation of a \emph{guided filter distribution} $P_{t_{n,s}}^\psi$ given by
\begin{equation}
  \frac{\diff P_{t_{n,s}}^\psi}{\diff x_{t_{n,s}}} \propto \psi_{t_{n,s}}(x_{t_{n,s}}) \cdot p(x_{t_{n,s}} \given y_{1:n}).
  \label{eqn:guided_filter_dist} \end{equation}
The filtered particles are moved according to the law of the latent process to construct the \emph{propagated particles}, $X^{1:J}_{t_{n,s+1}}$. 
The collection of propagated particles is resampled recursively to obtain the next generation of filtered particles.
The weighting of the propagated particles is based on the \emph{guide function} $\psi_{t_{n,s}}\sr: \mathbb X \sr\to \mathbb R^+$ that approximates the forecast likelihood $p(y_{n+1:(n+\nl\land N)}\given x_{t_{n,s}})$ for some $\nl\sr\geq 1$, where $n+\nl \land N = \min(n+\nl,N)$.
We require that $\psi_{t_0}(x) \sr= 1$ and $\psi_{t_N}(x) \sr= g_N\left(y_N\given x\right)$ for all $x\in\mathbb{X}$.  
The assigned importance weight for $X_{t_{n,s}}^j$ is given by:
\begin{equation}w^j \gets w_{t_{n,s}}(X_{t_{n,s}}^j, \tilde X_{t_{n,s-1}}^j) := \left\{ \begin{array}{ll}
    \displaystyle \frac{\psi_{t_{n,s}}(X_{t_{n,s}}^j)}{\psi_{t_{n,s-1}}(\tilde X_{t_{n,s-1}}^j)} & \text{ if } s\sr\neq 1 \text{ or } n\sr=0  \vspace{1ex}\\
    \displaystyle \frac{\psi_{t_{n,s}}(X_{t_{n,s}}^j) g_n(y_n \given \tilde X_{t_{n,s-1}}^j)} {\psi_{t_{n,s-1}}(\tilde X_{t_{n,s-1}}^j)} & \text{ otherwise. }
  \end{array} \right.\label{eqn:wfunc}\end{equation}
If $s\sr=1$ and $n\sr\geq 1$, that is if $t_{n,s-1}$ is an observation time, we effectively divide the denominator $\psi_{t_{n,s-1}}(\tilde X_{t_{n,s-1}}^j)$ in \eqref{eqn:wfunc} by $g_n(y_n \given \tilde X_{t_n}^j)$, because at time $t_{n,1} \sr> t_n$, the past observation $y_n$ should no longer be considered in assessing the fitness of the particle.
Particles $X_{t_{n,s}}^{1:J}$ are resampled with probability proportional to these weights.
We used systematic resampling for our numerical implementation \citep{douc05}.
The likelihood of data is defined as
\[
\ell_{1:N}(y_{1:N}) = \mathbb{E} \left[\prod_{n=1}^N g_n(y_n\given  X_{t_n}) \right],
\]
where the expectation is taken with respect to the law of $\{X_t \giventh t\sr\geq t_0 \}$.
In common with standard particle filters, Algorithm~\ref{alg:GIRF} computes a likelihood estimate denoted by $\hat{\ell}$:
\[
\hat\ell = \prod_{n=0}^{N-1} \prod_{s=1}^S \frac{1}{J} \sum_{j=1}^J w_{t_{n,s}}(X_{t_{n,s}}^j, \tilde X_{t_{n,s-1}}^j).
\]

The GIRF defined by Algorithm~\ref{alg:GIRF} is equivalent to the bootstrap particle filter if we take $\nr\sr=1$ and $\psi_{t_n}(x_{t_n}) = g_n(y_n \given x_{t_n})$.
Algorithm~\ref{alg:GIRF} becomes an instance of APF in the special case where $\nr\sr=1$ and $\psi_{t_n}(x_{t_n}) = g_n(y_n \given x_{t_n}) \cdot {g_{n+1}}\{ y_{n+1} \given \bar \xi_{t_{n+1}}(x_{t_n})\}$, where $\bar \xi_{t_{n+1}}(x_{t_n})$ denotes a forecast value for $X_{t_{n+1}}$ given $X_{t_n}{\,=\,}x_{t_n}$.
Since APF does not include intermediate resampling, we will find that it does not have the favorable scaling properties that GIRF methodology can enjoy when $\nr\approx d$.

The computational cost of Algorithm~\ref{alg:GIRF} typically scales as $O(J \nr d)$.
The storage cost is $O(Jd)$ since only the current latent process and guide function values need to be saved for each particle during the filtering and propagation recursions.
Our implementation of Algorithm~\ref{alg:GIRF} is available at \url{https://github.com/joonhap/GIRF.git}.
A critical scaling question is the rate at which $J$ has to grow with $d$ in order to obtain satisfactory Monte Carlo performance.
Numerical results in Section~\ref{sec:impl} show that the MC error in the likelihood estimate and the filtering estimates is reasonably small for moderately large dimensions with feasible number of particles.
Our theoretical results in Section~\ref{sec:theory} supports the empirically observed scaling.

\subsection{Constructing a guide function}\label{sec:guide}
An ideal guide function is the forecast likelihood of all future observations \citep{whiteley2014twisted}.
Theorem~\ref{thm:main} in Section~\ref{sec:theory} will show that a bound on the MC error in filtering estimates is minimized with this guide function.
In practice, one can consider approximations to the forecast likelihood of a certain number of future observations for the guide function: for $n\sr\in 0\col N{-}1$, $s\sr\in 1\col \nr$, 
\begin{equation}
  \psi_{t_{n,s}}(x) \approx p(y_{n+1:n+\nl} \given X_{t_{n,s}}\sr=x).
  \label{eqn:guidefunc}
\end{equation}
Model dependent constructions of the guide function have been proposed for specific latent processes, such as perfectly observed diffusion processes \citep{lin2010generating} or stochastically generated graph models \citep{bloem2018random}.
\citet{delmoral2015sequential} discussed construction of guide function using Gaussian processes.
A general, iterative method to construct guide functions that lead to progressively more balanced resampling weights has been proposed in \citet{guarniero2017iterated}.
However, supplementary regularization in the construction of the guide function will be necessary for application to high-dimensional models, because methods discussed in \citet{guarniero2017iterated} rely on approximations using mixtures of normal densities, which become problematic for high-dimensional distributions.

We propose simulation-based approaches for constructing the guide function.
These approaches can be used for implicit models for which only the simulator of the latent process is available.
The method described below is used in our numerical studies (Sections~\ref{sec:impl} and \ref{sec:paramEst_numerical}) for non-Gaussian examples.
We will use an approximation to forecast likelihood $p(y_{n+1:n+\nl}\given X_{t_{n,s}}\sr=x)$ of the form 
\begin{equation}
  \psi_{t_{n,s}}\left(x\right)
  = \prod_{b=1}^{\min(\nl, N-n)} \psi\left(x; {t_{n,s}\sr\to t_{n+b}}\right)^{\eta({t_{n,s}\sr\to t_{n+b}})},
  \label{eqn:lookahead_u}
\end{equation}
where
\[
\psi(x; {t_{n,s} \sr\to t_{n+b}}) \approx p_{Y_{n+b}\given X_{t_{n,s}}}\left(y_{n+b} \,\middle|\, x\right)
\]
and $0 \sr\leq \eta(t_{n,s}\sr\to t_{n+b}) \sr\leq 1$ denotes fractional powers that are non-decreasing as $t_{n,s}$ increases.
If $s\sr=\nr$ and $b\sr=1$, we set $\psi(x_{t_{n,\nr}}, t_{n,\nr}\sr\to t_{n+1}) := g_{n+1}(y_{n+1}\given  x_{t_{n,\nr}})$ and $\eta(t_{n,\nr}\sr\to t_{n+1}) = 1$, because the measurement density at $t_{n,\nr} \sr= t_{n+1}$ can be exactly evaluated.
The fact that the powers $\eta(t_{n,s}\sr\to t_{n+b})$ are non-decreasing as $t_{n,s}$ increases may reflect the algorithm user's increasing confidence in the accuracy of the approximated forecast likelihood as the forecast interval becomes shorter.
Increasing powers $\eta(t_{n,s}\sr\to t_{n+b})$ as time progresses can also be understood as gradually introducing the information provided by $y_{n+\nl}$ to the filtering algorithm over the time interval $[t_{n,1}, t_{n+\nl}]$.
We propose a sequence of powers defined as
\begin{equation}
  \eta(t_{n,s}\sr\to t_{n+b}) := 1 - \frac{t_{n+b}-t_{n,s}}{\max\{t_{n+b}-t_{\max(n+b-\nl, 0)},~2(t_{n+1}-t_n)\}}. \label{eqn:linfracpowers} \end{equation}
The variance of the resampling weights \eqref{eqn:wfunc} under \eqref{eqn:lookahead_u} and \eqref{eqn:linfracpowers} can be $\mathcal O(1)$ in $d$, the model dimension.
In cases where $n\sr+b\sr-\nl \sr\geq 0$ and $t_{n+b}\sr-t_{n+b-\nl} \sr\geq 2(t_{n+1}\sr-t_n)$, the resampling weight for $s\sr\neq 1$ is given by
\begin{equation}
\frac{\psi_{t_{n,s+1}}(X_{t_{n,s+1}}^j)}{\psi_{t_{n,s}}(\tilde X_{t_{n,s}}^j)}
= \prod_{b=1}^L \frac{\psi(X_{t_{n,s+1}}^j;t_{n,s+1}\sr\to t_{n+b})^{1-\frac{t_{n+b}-t_{n,s+1}}{t_{n+b}-t_{n+b-L}}}}{\psi(\tilde X_{t_{n,s}}^j; t_{n,s} \sr\to t_{n+b})^{1-\frac{t_{n+b}-t_{n,s}}{t_{n+b}-t_{n+b-L}}}}.
\label{eqn:resweight_heuristic}
\end{equation}
If the difference between $\psi(\tilde X_{t_{n,s}}^j; t_{n,s}\sr\to t_{n+b})$ and $\psi(X_{t_{n,s+1}}^j;t_{n,s+1}\sr\to t_{n+b})$ is small, \eqref{eqn:resweight_heuristic} is close to
\begin{equation}
\prod_{b=1}^L \frac{\psi(\tilde X_{t_{n,s}}^j;t_{n,s}\sr\to t_{n+b})^{1-\frac{t_{n+b}-t_{n,s+1}}{t_{n+b}-t_{n+b-L}}}}{\psi(\tilde X_{t_{n,s}}^j; t_{n,s} \sr\to t_{n+b})^{1-\frac{t_{n+b}-t_{n,s}}{t_{n+b}-t_{n+b-L}}}}
= \prod_{b=1}^L \psi(\tilde X_{t_{n,s}}^j; t_{n,s}\sr\to t_{n+b})^{\frac{t_{n,s+1}-t_{n,s}}{t_{n+b}-t_{n+b-L}}}.
\label{eqn:resweight_approx_heuristic}
\end{equation}
If all observation interval is of equal length and the intermediate time points $t_{n,s}$ for $s\sr\in 1\col \nr{-}1$ are equally spaced between $t_n$ and $t_{n+1}$, then \eqref{eqn:resweight_approx_heuristic} is equal to
\[
\prod_{b=1}^L \psi(\tilde X_{t_{n,s}}^j; t_{n,s} \sr\to t_{n+b})^\frac{1}{LS},
\]
which is approximately on the order of
\[
\left[\left\{\prod_{b=1}^L p_{Y_{n+b}|X_{t_{n,s}}}(y_{n+b} \given \tilde X_{t_{n,s}}^j)\right\}^\frac{1}{L}\right]^\frac{1}{d}
\]
if $S\sr=d$.
Since the predictive likelihoods $p_{Y_{n+b}|X_{t_{n,s}}}$ typically scale exponentially in $d$, raising them to a power of $\frac{1}{d}$ can make the resampling weights \eqref{eqn:resweight_heuristic}, and consequently their variance, $\mathcal O(1)$ in $d$.
We also found that the powers given by \eqref{eqn:linfracpowers} led to good numerical performance of GIRF on the examples we considered.
On the contrary, if we set $\eta(t_{n,s}\sr\to t_{n+b})=1$ for all $t_{n,s} \leq t_{n+b}$, the variance of the resampling weights at $s\sr=1$ can be noticeably larger than at other intermediate time points because a new term $\psi(x; t_{n,1}\sr\to t_{n+\nl})$ is suddenly multiplied to the resampling weights at $t_{n,1}$.
Setting the denominator in \eqref{eqn:linfracpowers} to be at least twice the observation interval, $2(t_{n+1}-t_n)$, ensures that for $L\sr=1$ and $s$ small, the power $\eta(t_{n,s}\sr\to t_{n+1})$ is at least $\frac{1}{2}$.
Otherwise, if $\eta(t_{n,s}\sr\to t_{n+1})$ is too small and $L\sr=1$, the guide function $\psi_{t_{n,s}}(x) = \psi(x\giventh t_{n,s}\sr\to t_{n+1})^{\eta(t_{n,s}\sr\to t_{n+1})}$ can become too uninformative to guide particles to the regions of the sample space that are consistent with the future observation.
In this case, the particles that are not properly guided may have large resampling weight variance at later time steps.

\newcommand\G{\mathrm{G}}
\newcommand\Z{\mathrm{Z}}
\subsubsection{Approximating the forecast likelihood using guide simulations}\label{sec:approx_forecast_lik}
We propose two ways of obtaining an approximate forecast likelihood $\psi(x\giventh t_{n,s}\sr\to t_{n+b})$ in the absence of a closed-form transition density for the latent process.
\paragraph{(i) A moment matching method.}
We will assume that the measurement density $g_{n+b}(\,\cdot \,|\, X_{t_{n+b}})$ belongs to a family of densities $\{\check g(\,\cdot \given \mu, \Sigma) \giventh \mu, \Sigma\}$ that are parameterized by the mean $\mu$ and the variance $\Sigma$.
We denote the mean and the variance by $\mu_{n+b}(X_{t_{n+b}})$ and $\Sigma_{n+b}(X_{t_{n+b}})$:
\[
g_{n+b}(\,\cdot \given X_{t_{n+b}}) \equiv \check g\big[\cdot \given \mu_{n+b}(X_{t_{n+b}}), \Sigma_{n+b}(X_{t_{n+b}}) \big].
\]
We make a forecast from the current state $X_{t_{n,s}}{\,=\,}x$ to time $t_{n+b}$ using a deterministic skeleton of $\{X_t\}$.
A deterministic skeleton is a deterministic process that approximates the conditional mean of the latent process $\left\{X_t \giventh t \geq t_{n,s}\right\}$ given $X_{t_{n,s}}{\,=\,}x$.
This deterministic forecast will be denoted by $\bar \xi(x;t_{n,s}\sr\to t_{n+b})$.
We next approximate the forecast variance of $Y_{n+b}$ given $X_{t_{n,s}}{\,=\,}x$, which can be expressed as
\begin{equation}
\text{Var}(Y_{n+b} | X_{t_{n,s}}\sr=x)
= \text{Var}\big( \E[Y_{n+b}|X_{t_{n+b}}] \big| X_{t_{n,s}}\sr=x \big)
+ \E\big[ \text{Var}(Y_{n+b}|X_{t_{n+b}}) \big| X_{t_{n,s}}\sr=x \big],
\label{eqn:forecast_var_decomp}
\end{equation}
using a collection of $J_\G$ random forecast simulations for $X_{t_{n+b}}$ from $X_{t_{n,s}}{\,=\,}x$, which we call guide simulations and denote by $\xi_{j_\G}(x; t_{n,s}\sr\to t_{n+b})$, $j_\G \sr\in 1\col J_\G$.
The sample variance of $\E[Y_{n+b}|X_{t_{n+b}}] = \mu_{n+b}(X_{t_{n+b}})$ evaluated at these guide simulations approximates the first term on the right hand side of \eqref{eqn:forecast_var_decomp}.
We denote this sample variance by $\Xi(x; t_{n,s}\sr\to t_{n+b})$.
The second term on the right of \eqref{eqn:forecast_var_decomp} can be approximated by $\Sigma_{n+b}\big(\bar \xi(x;t_{n,s}\sr\to t_{n+b})\big)$.
We then approximate the forecast likelihood of $Y_{n+b}=y_{n+b}$ given $X_{t_{n,s}}{\,=\,}x$ by
\begin{equation}
  \psi(x; t_{n,s}\sr\to t_{n+b})
  = \check g\left[y_{n+b}\,\middle|\, \mu_{n+b}\big(\bar \xi(x;t_{n,s}\sr\to t_{n+b})\big), \, \Sigma_{n+b}\big(\bar \xi(x;t_{n,s}\sr\to t_{n+b})\big) + \Xi(x;t_{n,s}\sr\to t_{n+b}) \right].
  \label{eqn:approx_u_in_family}\end{equation}
One may use \eqref{eqn:approx_u_in_family} for measurement processes without well-defined first and second moments, if the measurement noise is additive and the measurement process belongs to a family that is closed under independent sums, such as the Cauchy distribution.
We view the parameters $\mu$ and $\Sigma$ of the family $\{\check g(\,\cdot \given \mu, \Sigma)\}$ as representing the center and the variability of the distributions respectively.
For two independent random variables $X_1$ and $X_2$ with densities $\check g(\, \cdot \given \mu, \Sigma)$ and $\check g(\, \cdot \given 0, \Sigma')$ respectively, we suppose that $X_1 + X_2$ has density $\check g(\,\cdot \given \mu, \Sigma\sr+\Sigma')$.
The forecast variability $\Xi(x;t_{n,s}\sr\to t_{n+b})$ may be approximated by, for example, a value for which the distribution with density $\check g\left(\,\cdot \given 0, \Xi(x;t_{n,s}\sr\to t_{n+b}) \right)$ has the same inter-quantile distance as the sample inter-quantile distance of the random forecasts.

Often times, the measurement process of a spatiotemporal POMP model is local, in the sense that the measurement in the $i$-th spatial unit depends only on the state of the same unit.
In such cases, the measurement density can be expressed as
\begin{equation}
g_n(y_n \given x_{t_n}) = \prod_{i=1}^d g_n^{[i]} (y_n^{[i]} \given x_{t_n}^{[i]}).
\label{eqn:gn_decomp}
\end{equation}
If each local measurement density $g_n^{[i]}$ belongs to a family $\{\check g^{[i]}(\,\cdot \given \mu^{[i]}, \Sigma^{[i]})\}$, we may take
\begin{multline}
  \psi(x; t_{n,s}\sr\to t_{n+b})\\
  = \prod_{i=1}^d \check g^{[i]}\left[y_{n+b}^{[i]}\,\middle|\, \mu_{n+b}^{[i]}\big\{\bar \xi^{[i]}(x; t_{n,s}\sr\to t_{n+b})\big\}, \, \Sigma_{n+b}^{[i]}\big\{\bar \xi(x;t_{n,s}\sr\to t_{n+b})\big\} + \Xi^{[i]}(x;t_{n,s}\sr\to t_{n+b}) \right],
\label{eqn:guide_indep}\end{multline}
where $\bar \xi^{[i]}(x; t_{n,s}\sr\to t_{n+b})$ is the $i$-th component of the deterministic forecast and $\Xi^{[i]}(x;t_{n,s}\sr\to t_{n+b})$ is the sample variance of $\mu_{n+b}^{[i]}$ evaluated at the guide simulations.
We note that $\bar \xi^{[i]}(x; t_{n,s}\sr\to t_{n+b})$ is obtained by simulating the deterministic skeleton jointly for all dimensions, and also $\Xi^{[i]}(x; t_{n,s}\sr\to t_{n+b})$ by simulating the joint random latent process.
Thus $\psi(x;t_{n,s}\sr\to t_{n+b})$ constructed by \eqref{eqn:guide_indep} makes some allowance for the correlation of the latent process between dimensions.
The forecast likelihood approximated this way can be reasonably accurate when the variances of the independent measurement processes in \eqref{eqn:gn_decomp} are larger than the covariance of the guide simulations between the spatial components.

We note that one can save computational effort by using locally linear approximations for the forecast variability.
Suppose that for $t \in (t_{n,s}, t_{n+b})$ the ancestor of a particle $X_t^j$ is $X_{t_{n,s}}^{j'}$.
One may approximate the forecast variability from $t$ to $t_{n+b}$ for particle $X_t^j$ as
\begin{equation}
\Xi(X_t^j; t\sr\to t_{n+b}) \approx \Xi(X_{t_{n,s}}^{j'}; t_{n,s}\sr\to t_{n+b}) \cdot \frac{t_{n+b}-t}{t_{n+b}-t_{n,s}}.
\label{eqn:fcvar_linapprox}\end{equation}
The forecast variability can be re-estimated using new random forecasts at each $t_{n,1}$, $n\in 1\col N{-}1$, or more often if the locally linear approximation becomes unreliable.

\paragraph{(ii) A quantile-based method.}
The second method uses the sample quantiles of the guide simulations.
For some $K\sr>1$ and for $k\sr\in 1\col K$, let $\hat q_k^{[i]}(x; t_{n,s}\sr\to t_{n+b})$ denote the sample quantile corresponding to the cumulative probability of $\frac{k-0.5}{K}$ for the $i$-th component of the guide simulations for $X_{t_{n+b}}$ given $X_{t_{n,s}}\sr=x$.
We then define the guide function as
\begin{equation}
  \psi(x; t_{n,s}\sr\to t_{n+b})
  = \prod_{i=1}^d \frac{1}{K} \sum_{k=1}^K g_{n+b}^{[i]}\big[y_{n+b}^{[i]} \big| \hat q_k^{[i]}(x; t_{n,s}\sr\to t_{n+b}) \big].
  \label{eqn:quantile_guide}
\end{equation}
The number $K$ of sample quantile values can be chosen such that at least one of the sample quantiles belong to the effective support of the measurement likelihood function $g_{n+b}^{[i]}(y_{n+b}^{[i]}|\,\cdot\,)$.
Similarly to the moment-matching method, the guide simulations can be made only at a small fraction of the intermediate time points.
Suppose again that for $t\sr\in (t_{n,s},t_{n+b})$ the ancestor of $X_t^j$ is $X_{t_{n,s}}^{j'}$.
Under the same assumption that the forecast variance increases approximately linearly in the forecast time length, we can approximate the $k$-th quantile of the forecast distribution of $X^{[i]}_{t_{n+b}}$ given $X_t^j$ as
\begin{multline}
  \hat q_k^{[i]}(X_t^j; t\sr\to t_{n+b}) \\\approx \bar \xi^{[i]}(X_t^j; t\sr\to t_{n+b}) + \left( \hat q_k^{[i]}(X_{t_{n,s}}^{j'}; t_{n,s}\sr\to t_{n+b}) - \bar \xi^{[i]}(X_{t_{n,s}}^{j'}; t_{n,s}\sr\to t_{n+b}) \right) \cdot \sqrt{\frac{t_{n+b}-t}{t_{n+b}-t_{n,s}}},
  \label{eqn:sample_quantile_approx}
\end{multline}
where $\bar \xi^{[i]}(x; t\sr\to t_{n+b})$ is the $i$-th component of the deterministic forecast for $X_{t_{n+b}}$ given $X_t\sr=x$.
We point out the case of using all guide simulations, that is, letting $K$ equal to the number of guide simulations $J_\G$ and replacing $\hat q_k^{[i]}(x;t_{n,s}\sr\to t_{n+b})$ in \eqref{eqn:quantile_guide} and \eqref{eqn:sample_quantile_approx} by 
\begin{multline}
\tilde \xi_{j_\G}(X_t^j; t\sr\to t_{n+b})\\
= \bar\xi (X_t^j; t\sr\to t_{n+b}) + \left( \xi_{j_\G}(X_{t_{n,s}}^{j'}; t_{n,s}\sr\to t_{n+b}) - \bar\xi(X_{t_{n,s}}^{j'};t_{n,s}\sr\to t_{n+b}) \right) \cdot \sqrt{\frac{t_{n+b}-t}{t_{n+b}-t_{n,s}}}
\label{eqn:guide_sim_approx}
\end{multline}
for $j_\G \sr\in 1\col J_\G$.
This can be particularly useful in the case where each local latent process $\{X_t^{[i]}\}$ is multi-dimensional.
In this case, ordering the vectors $\xi_{j_\G}^{[i]}(X_{t_{n,s}}^j; t_{n,s}\sr\to t_{n+b})$, $j_\G\sr\in 1\col J_\G$, to compute sample quantiles may not be straightforward, but using all guide simulations in \eqref{eqn:quantile_guide} removes the need for ordering.

\subsubsection{Dealing with the correlation between spatial units}
The two approaches discussed in Section~\ref{sec:approx_forecast_lik} approximates the forecast likelihood $p_{Y_{n+b}|X_{t_{n,s}}}(y_{n+b} \given x)$ by the product of terms approximating $p_{Y_{n+b}^{[i]}|X_{t_{n,s}}}(y_{n+b}^{[i]} \given x)$ for $i\sr\in 1\col d$ under the assumption of spatially local, independent measurements \eqref{eqn:gn_decomp}.
We now consider the case where \eqref{eqn:gn_decomp} is not satisfied.
Specifically, we address two sources of correlation between $\{Y_{n+b}^{[i]}\giventh, i\sr\in 1\col d\}$ conditional on $X_{t_{n,s}}$.
First, $Y_{n+b}^{[i]}$ may not depend only on $X_{t_{n+b}}^{[i]}$ but also the other components $X_{t_{n+b}}^{[i']}$, $i'\sr\neq i$.
Second, the measurement processes for $Y_{n+b}^{[i]}$, $i\sr\in 1\col d$, conditional on $X_{t_{n+b}}$ may not be independent of each other.
We propose a Monte Carlo approximation of the forecast likelihood using guide simulations in the the case where the measurement density can be expressed as
\begin{equation}
g_{n+b}(y_{n+b} \given X_{t_{n+b}}) = \E_Z \prod_{i=1}^d \tilde g_{n+b}^{[i]}\left[y_{n+b}^{[i]} \giventh h(X_{t_{n+b}}, Z), X_{t_{n+b}}^{[i]}\right],
\label{eqn:gn_form_cor}
\end{equation}
where $Z$ is a random variable that induces correlation between local measurement processes and $h$ and $\tilde g_{n+b}^{[i]}$, $i\sr\in 1\col d$ are some functions.
We assume that the random variable $Z$ can be simulated and that it is independent of $\{X_t\}$.
Given $X_{t_{n,s}}^j$, we make $J_\G$ guide simulations $\xi_{j_\G}(X_{t_{n,s}}^j; t_{n,s}\sr\to t_{n+b})$ for $j_\G \sr\in 1\col J_\G$ and simulate $Z_{j_\Z}$ for $j_\Z \sr\in 1\col J_\Z$ according to the law of $Z$.
We order the values $h(\xi_{j_\G}, Z_{j_\Z})$ for $j_\G \sr\in 1\col J_\G$ and $j_\Z \sr\in 1\col J_\Z$ and partition $(1\col J_\G)\times (1\col J_\Z)$ into $\mathcal K_k$, $k\sr\in 1\col K$, such that each $\mathcal K_k$ has the same size and that $h(\xi_{j_\G}, Z_{j_\Z}) \leq h(\xi_{j'_\G}, Z_{j'_\Z})$ whenever $(j_\G, j_\Z) \sr\in \mathcal K_{k}$, $(j'_\G, j'_\Z) \sr\in \mathcal K_{k'}$, and $k < k'$.
We can then approximate the forecast likelihood $p_{y_{n+b} | X_{t_{n,s}}}(y_{n+b} \given X_{t_{n,s}}^j)$ by
\begin{equation}
\frac{1}{K} \sum_{k=1}^K \prod_{i=1}^d \left\{ \frac{1}{|\mathcal K_k|} \sum_{(j_\G, j_\Z) \in \mathcal K_k} \tilde g_{n+b}^{[i]}\left[y_{n+b}^{[i]}; h(\xi_{j_\G}, Z_{j_\Z}), \xi_{j_\G}^{[i]}\right] \right\},
\label{eqn:partition_approx_forecast_lik}
\end{equation}
where $|\mathcal K_k|$ denotes the size of $\mathcal K_k$.
We note that at the intermediate time points where the guide simulations are not made, the $\xi_{j_\G}$ in \eqref{eqn:partition_approx_forecast_lik} can be replaced by the approximations $\tilde \xi_{j_\G}$ in \eqref{eqn:guide_sim_approx}.
The approximation \eqref{eqn:partition_approx_forecast_lik} is motivated by the expression
\begin{equation}
p_{Y_{n+b}|X_{t_{n,s}}}(y_{n+b}\given X_{t_{n,s}}^j)
= \E \left[ \E\left\{ \prod_{i=1}^d \tilde g_{n+b}^{[i]} \left[ y_{n+b}^{[i]} \giventh h(X_{t_{n+b}}, Z), X_{t_{n+b}}^{[i]} \right] \middle| h(X_{t_{n+b}}, Z) \right\} \middle| X_{t_{n,s}}\sr= X_{t_{n,s}}^j \right].
\label{eqn:guide_partition_motivation}
\end{equation}
There is a bias-variance tradeoff associated with the choice of $K$.
Since \eqref{eqn:partition_approx_forecast_lik} is an average of products of $d$ terms, its value will likely be determined by one of the partitions giving the largest product.
Therefore the Monte Carlo variance of \eqref{eqn:partition_approx_forecast_lik} can scale linearly with $K$, because effectively only $\frac{1}{K}$ of the simulations are used.
On the other hand, if $K$ is small, the values of $h(\xi_{j_\G}, Z_{j_\Z})$ within each partition can have a large range, and the average over the partition can have a large bias with respect to inner conditional expectation in \eqref{eqn:guide_partition_motivation}.

We show two examples that belong to the class of measurement models described in \eqref{eqn:gn_form_cor}.
\paragraph{(i) Correlated measurement noise.}
The first example is a measurement model with correlated noise given by
\[
Y_{n+b}^{[i]} = X_{t_{n+b}}^{[i]} + Z + \epsilon^{[i]}, \quad i\sr\in 1\col d,
\]
where $Z$ is a common noise term and $\epsilon^{[i]}$ are independent measurement noises specific to the $i$-th spatial unit.
This corresponds to the case of $h(X_{t_{n+b}}, Z) = Z$ in \eqref{eqn:gn_form_cor}.
In this case, each partition $\mathcal K_k$ consists of the values of $Z_{j_\Z}$ within a certain range paired with all guide simulations.

\paragraph{(ii) A global latent process parameterizing the measurement process.}
The second example concerns the case where there is a component in the latent process, $\{X_t^{[i_0]}\}$, which affects all local measurement processes that are independent of one another:
\[g_n(y_n \given X_{t_n}) = \prod_{i=1}^d \tilde g_n^{[i]}(y_n^{[i]} \giventh X_{t_n}^{[i_0]}, X_{t_n}^{[i]}).
\]
This corresponds to the case where $h(X_{t_{n+b}}, Z) = X_{t_{n+b}}^{[i_0]}$ in \eqref{eqn:gn_form_cor}.
Being a global process parameterizing all local measurement processes, $X_t^{[i_0]}$ may have no local measurement process for itself, but we may formally write the measurement density for the $i_0$-th component as $g_n^{[i_0]}(y_* \given X_{t_n}) \equiv 1$ for an arbitrary observation value $y_*$.
The approximation of forecast likelihood by \eqref{eqn:partition_approx_forecast_lik} involves the partitioning of $\xi^{[i_0]}_{j_\G}$, with no $Z$ component.

\section{Numerical examples}\label{sec:impl}
In this section, we apply the GIRF to two examples.
We investigate the empirical scaling properties of an implementation of GIRF compared to alternative methods.
More numerical results that demonstrate the practical utility of the GIRF approach in parameter estimation are given in Section~\ref{sec:paramEst_numerical}.
In all our examples, the number of intermediate sub-intervals $\nr$ is set equal to the space dimension $d$.

\subsection{Correlated Brownian motion}\label{sec:CBM}
We first applied our algorithm to a multi-dimensional correlated Brownian motion.
Each component of the Brownian motion was identically distributed with increments per unit time having mean zero and unit variance.
The correlation coefficient matrix $A$ for the increments was chosen such that its all off-diagonal entries equaled $\alpha$. 
The initial latent distribution at time $t_0 = 0$ was given by the point mass at the origin of $\mathbb{R}^d$.
Measurements were made at positive integer time points $t_{1:50} = 1\col 50$, with independent Gaussian noises of mean zero and unit variance.
The POMP model can be expressed as follows, where $I$ denotes the $d$ dimensional identity matrix:
$$X_{\t+\delta} = X_\t + \mathcal N(0, \delta A), \quad \quad Y_n = X_{t_n} + \mathcal N(0, I).$$
The guide function $\psi_{t_{n,s}}$ was defined as in \eqref{eqn:lookahead_u}, where $\nl\sr=2$ or $3$, and $\eta_{t_{n,s},t_{n+b}}$ were taken as in \eqref{eqn:linfracpowers}.
Since the process had zero drift, the forward state projection by the deterministic mean process was given by $\mu_{t_{n+b}}(x_{t_{n,s}}) {\,=\,} x_{t_{n,s}}$.
The variance of $X_{t_{n+b}}$ conditioned on $X_{t_{n,s}}{\,=\,}x_{t_{n,s}}$ was equal to $(t_{n+b}-t_{n,s}) \cdot A$, so the guide function was defined as
\begin{equation}
  \psi_{t_{n,s}}\left(x_{t_{n,s}}\right)
  = \prod_{b=1}^\nl \phi_\dim\left[y_{t_{n+b}}\giventh x_{t_{n,s}}, \,(t_{n+b}-t_{n,s})\cdot A + I\right]^{\eta(t_{n,s}\sr\to t_{n+b})},
  \label{eqn:uBMexactCov}
\end{equation}
where $\phi_\dim(\,\cdot\giventh \mu,\Sigma)$ denotes the density of the $\dim$-dimensional Gaussian distribution with mean $\mu$ and variance $\Sigma$.
Evaluating \eqref{eqn:uBMexactCov} typically requires procedures such as the Cholesky decomposition and takes $O\left(d^3\right)$ computations.
Since this could be demanding for large $d$, we also used an approximation of \eqref{eqn:uBMexactCov} obtained by ignoring the off-diagonal elements of $A$,
\begin{equation}
  \psi_{t_{n,s}}\left(x_{t_{n,s}}\right)
  = \prod_{b=1}^\nl \phi_\dim\left[y_{t_{n+b}}\giventh  x_{t_{n,s}}, \,(t_{n+b}-t_{n,s}) \cdot I + I\right]^{\eta(t_{n,s}\sr\to t_{n+b})}.
  \label{eqn:uBMdiagCov}\end{equation}

\begin{table}[t!]
  \centering\small
  \begin{subtable}[t]{\linewidth}
    \centering
    \begin{tabular}{cccccccccc}\toprule
      \multicolumn{1}{c}{} &\multicolumn{3}{c}{} &\multicolumn{6}{c}{CPU time (sec)}\\\cmidrule{5-10}
      Method & Total no. of particles & $\nr$ & $\nl$ & $d=5$ & $d=20$ & $d=50$ & $d=100$ & $d=200$ & $d=500$ \\ \midrule
      APF & $2,000\times d$ & 1 & 2 & 1 & 13 & 102 & 382 & 1397 & -- \\
      2-lookahead & $2,000\times d$ & 1 & 3 & 1 & 15 & 139 & 474 & 1874 & -- \\
      GIRF ($\nl\sr=2$)& 2,000 & $d$ & 2 & 1 & 10 & 55 & 206 & 814 & 4990 \\
      GIRF ($\nl\sr=3$)& 2,000 & $d$ & 3 & 1 & 12 & 68 & 294 & 1060 & 6416 \\\bottomrule
    \end{tabular}
    \label{tab:LGM_APF_GIRF_compcost}
    \caption{Computational costs}
  \end{subtable}
  \vspace{1ex}
  
  \begin{subtable}[t]{\linewidth}
    \centering
    \begin{tabular}{ccccccc}\toprule
      \multicolumn{2}{c}{} & APF & 2-lookahead & GIRF & GIRF & Kalman filter \\
      \multicolumn{2}{c}{} & ($S{=}1,\nl{=}2$) & ($S{=}1,\nl{=}3$) & ($S{=}d,\nl{=}2$) & ($S{=}d,\nl{=}3$) & $\log \ell$ \\ \midrule
      \multirow{3}{*}{$d\sr=5$} & $\log\hat\ell-\log\ell$ & -0.001 & -0.07 & -0.32 & -0.06 & -485.6 \\
      & $s.d.(\log\hat\ell)$ & (0.53) & (0.46) & (0.49) & (0.62) \\
      & MSFE & 0.0003 & 0.0002 & 0.0008 & 0.0008 & \\ \midrule
      \multirow{3}{*}{$d\sr=20$} & $\log\hat\ell-\log\ell$ & -37.3 & -24.8 & -1.1 & +0.26 & -1904.0 \\
      & $s.d.(\log\hat\ell)$ & (9.1) & (8.6) & (1.1) & (0.86) \\
      & MSFE & 0.15 & 0.17 & 0.007 & 0.006 & \\ \midrule
      \multirow{3}{*}{$d\sr=50$} & $\log\hat\ell-\log\ell$ & -1366 & -1146 & -5.6 & -0.6 & -4790.2 \\
      & $s.d.(\log\hat\ell)$ & (144) & (119) & (5.4) & (1.8) \\
      & MSFE & 1.9 & 1.7 & 0.033 & 0.018 & \\ \midrule
      \multirow{3}{*}{$d\sr=100$} & $\log\hat\ell-\log\ell$ & -7096 & -6717 & -73 & -7.7 & -9499.1 \\
      & $s.d.(\log\hat\ell)$ & (424) & (366) & (10) & (3.4) \\
      & MSFE & 4.0 & 3.8 & 0.08 & 0.04 & \\ \midrule
      \multirow{3}{*}{$d\sr=200$} & $\log\hat\ell-\log\ell$ & -30688 & -29544 & -277 & -23 & -18909 \\
      & $s.d.(\log\hat\ell)$ & (1323) & (1333) & (27) & (7.2) \\
      & MSFE & 8.8 & 8.2 & 0.15 & 0.10 & \\ \midrule
      \multirow{3}{*}{$d\sr=500$} & $\log\hat\ell-\log\ell$ & -- & -- & -1282 & -162 & -47415 \\
      & $s.d.(\log\hat\ell)$ & -- & -- & (56) & (16) \\
      & MSFE & -- & -- & 0.31 & 0.22 & \\ \bottomrule
    \end{tabular}
    \label{tab:LGM_APF_GIRF_numres}
    \caption{Difference between the log of the average of twenty likelihood estimates and the exact log likelihood ($\log\hat\ell - \log \ell$), the standard deviation of twenty log likelihood estimates $(s.d.(\log\hat\ell))$, and the mean squared filter error (MSFE) calculated as the squared error of the estimated filter means at terminal time averaged over $d$ components and over twenty repetitions. The exact log likelihoods ($\log \ell$) and filter means were computed using the Kalman filter.}
  \end{subtable}
  \caption{Comparison between the auxiliary particle filter, 2-lookahead method, and the GIRF with $\nl\sr=2$ and $\nl\sr=3$ for the correlated Brownian motion.}
  \label{tab:LGM_APF_GIRF}
\end{table}

We first compared the filtering performance of the auxiliary particle filter (APF), 2-lookahead filter, and the GIRF with $\nl\sr=2$ and 3 for varying dimensions $d=5$, 20, 50, 100, 200, and 500.
The correlation coefficient was fixed at $\alpha\sr=0$.
The APF was implemented by setting $\nr \sr=1$ and $\nl \sr=2$ in Algorithm~\ref{alg:GIRF}, and the 2-lookahead filter by setting $\nr \sr=1$ and $\nl \sr=3$.
The GIRF method used two thousand particles for all models.
The APF and the 2-lookahead filter used $d$ times as many particles, so that the computation time would be similar for all methods.
For the APF and the 2-lookahead filter, we used a parallelized version of Algorithm~\ref{alg:GIRF}, following the island particle filter approach of \citet{verge2015parallel}, for models with $d\sr\geq 50$ in order to avoid memory deficiency.
In these cases, the particles were divided into $d/10$ islands.
For $d\sr=500$, we could not run the APF and the 2-lookahead filter with $2000 d$ particles  even after parallelization, due to insufficient memory.
Each experiment was independently repeated for twenty times.
All experiments were carried out using our \textsf C++ implementation. 
The computational resources used and the numerical results averaged over twenty repetitions are shown in Table~\ref{tab:LGM_APF_GIRF}.
The exact likelihood of the data and the exact filtering distributions were computed using the Kalman filter.
We compared the log likelihood estimates ($\log\hat\ell$) and the mean squared errors of the estimated filter means at the terminal time $t_{50}$ averaged over all $d$ components (MSFE).
All experiments in Section~\ref{sec:CBM} were conducted on the Boston University Shared Computing Cluster.

The numerical results showed that the performance of the methods that did not use intermediate propagation and resampling steps, namely the APF and the 2-lookahead filter, decayed rapidly with dimension beyond $d\sr=20$.
In contrast, GIRF produced relatively accurate estimates of the likelihoods and the filter means in much higher dimensions.
In particular, the error in the Monte Carlo likelihood estimate for $d\sr=200$ was only 23 log units for $L\sr=3$.
The mean squared filter errors by GIRF were also relatively small compared to the marginal variance of the filtering distribution at the terminal time, $\text{Var}(X_{t_{50}}^{[1]}\given y_{1:50})$, which was equal to 0.62 for all models with different dimensions.
The mean squared filter errors by GIRF scaled roughly at a polynomial rate up to $d\sr=500$.
In contrast, the mean squared filter errors by the APF and the 2-lookahead filter were much greater than the filter variances beyond $d\sr=20$.
\citet{snyder2008obstacles} reported that a standard bootstrap particle filter would require at least $10^{11}$ particles for the same filtering problem in two hundred dimension in order to obtain filter mean estimates that are even less accurate than our GIRF estimates shown in Table~\ref{tab:LGM_APF_GIRF}.
In contrast, our GIRF estimates were obtained using only two thousand particles.
We remark that we also tried taking all $\eta_{t_{n,s},t_{n+b}}$ in \eqref{eqn:lookahead_u} equal to the unity regardless of $b$; in this case the GIRF also scaled substantially better than the APF, but the performance of the GIRF was somewhat worse than when we took $\eta_{t_{n,s},t_{n+b}}$ as in \eqref{eqn:linfracpowers} (results not shown).

\begin{table}[t]
  \centering\small
    \centering
    \begin{tabular}{cccccccc}\toprule
      \multicolumn{2}{r}{Correlation coefficient} & 0.0 & 0.1 & 0.2 & 0.3 & 0.4 & 0.5 \\ \midrule
      Kalman & $\log \ell$ & -9499 & -9431 & -9322 & -9198 & -9059 & -8905 \\ \midrule
      GIRF & $\log \hat \ell - \log \ell$ & -7.7 & -1.8 & -4.0 & -7.7 & -15 & -20 \\
      {[exact covariance]} & $s.d.(\log\hat\ell)$ & (3.4) & (4.7) & (6.0) & (5.3) & (6.1) & (6.6) \\
      & MSFE & 0.04 & 0.03 & 0.03 & 0.03 & 0.04 & 0.04 \\ \midrule
      GIRF & $\log \hat \ell - \log \ell$ & -7.7 & -36 & -99 & -183 & -273 & -373 \\
      {[diag covariance]} & $s.d.(\log\hat\ell)$ & (3.4) & (6.4) & (9.2) & (12) & (16) & (28) \\
      & MSFE & 0.04 & 0.05 & 0.08 & 0.13 & 0.13 & 0.14 \\ \bottomrule
    \end{tabular}
  \caption{Difference between the log of averaged likelihood estimates and the exact likelihood, the standard deviation of log likelihood estimates, and the mean squared filter errors for $d\sr=100$ dimensional models with varying degrees of correlation. Exact log likelihoods of data are shown in the first row. Results for both the guide function using the exact covariance \eqref{eqn:uBMexactCov} and that using the diagonal covariance \eqref{eqn:uBMdiagCov} are shown.}
  \label{tab:BM_LSE_corr}
\end{table}

Next, we investigated varying the correlation coefficient $\alpha$ of the Brownian motion.
The dimension was fixed at $d\sr=100$, and the correlation coefficient varied from 0 to 0.5.
We used GIRF with $\nr\sr=100$, $\nl\sr=3$, and two thousand particles.
Twenty independent filter runs were carried out for each value of correlation coefficient.
Table~\ref{tab:BM_LSE_corr} shows the errors in the log of the estimated likelihoods and the mean squared filter errors at the terminal time averaged over $d$ components.
All results were averaged over twenty independent filter runs.
We used both the guide function with the exact covariance as in \eqref{eqn:uBMexactCov} and the guide function with diagonal covariance as in \eqref{eqn:uBMdiagCov}.
When the exact covariance was used, the Monte Carlo errors in both the likelihood estimates and the filter means were relatively constant or slowly increased as the correlation coefficient $\alpha$ increased.
When the diagonal covariance was used, the Monte Carlo errors increased more rapidly as $\alpha$ increased, due to inaccurate approximation of the forecast likelihood by the guide function.
However, the GIRF runs still produced reasonable MC estimates using only two thousand particles in one hundred dimension even when the diagonal covariance approximation was used for $\alpha\sr=0.5$: the mean squared filter errors were about 0.14, which was less than the marginal variance of the filtering distribution at the terminal time, which was 0.50.
Considering that the diagonal covariance approximation differs significantly from the exact covariance in the case of $\alpha\sr=0.5$ and the fact that a one hundred dimensional model is well beyond the practically accessible range by the standard particle filters, we see that the GIRF method can be relatively robust with respect to inaccurate approximation to forecast likelihoods even in moderately high dimensions.

\subsection{Stochastic Lorenz 96 model}\label{sec:Lorenz}
The Lorenz 96 model is a nonlinear chaotic system which provides a simplified representation of global atmospheric circulation \citep{lorenz1996predictability}.
Stochastic versions of this model have been used to support the increased use of non-deterministic models for atmospheric science \citep{wilks05,palmer12}.
We considered a stochastic Lorenz 96 model with added Gaussian process noise, defined as follows:
\begin{equation}
  \diff X^{[i]}_t = \{ (X^{[i+1]}_t - X_t^{[i-2]}) \cdot X_t^{[i-1]} - X_t^{[i]} + F \} \diff t + \sigma_p \diff B^{[i]}_t, \qquad i \in 1\col d.
  \label{eqn:stoLorenz}\end{equation}
In the equation above, we understand that $X^{[0]} = X^{[d]}$, $X^{[-1]} = X^{[d-1]}$, and $X^{[d+1]} = X^{[1]}$.
The terms $\{B^{[i]}_t \giventh i\in 1\col d \}$ denote $d$ independent standard Brownian motions, and $\sigma_p$ the process noise magnitude. 
$F$ is a forcing constant, with $F{\,=\,}8$ considered by \citet{lorenz1996predictability} to induce chaotic behavior.
The system is started at the initial state $X^{[i]}_0 = 0$ for $i\in 1\col d{-}1$ and $X^{[d]}_0 = 0.01$.
Observations are independently made for each spatial unit at $t_n \sr= \Delta_\text{obs} \cdot n$ for $n \in 1\col 200$, where $\Delta_\text{obs}$ is either 0.1 or 0.5.
The measurement noise is normally distributed with mean zero and standard deviation $\sigma_m$.
We generated data for $d\sr=4$ and $d\sr=50$ with $F{\,=\,}8$ and $\sigma_p {\,=\,} \sigma_m {\,=\,} 1$, using the Euler-Maruyama method for numerical approximation of the sample paths of $X_t$ with time increments of 0.01.

\begin{figure}[t]
  \centering
  \includegraphics[width=.49\textwidth]{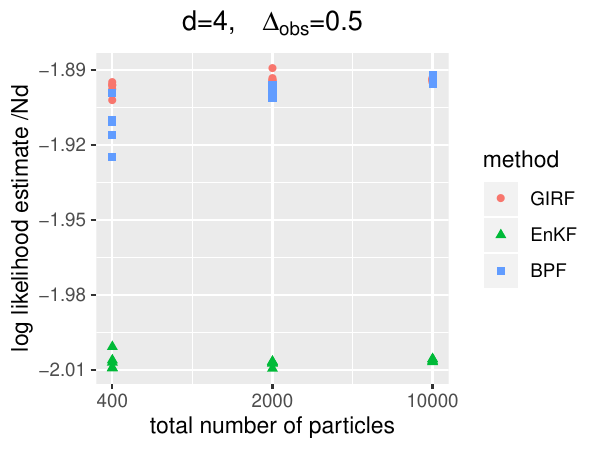} 
  \includegraphics[width=.49\textwidth]{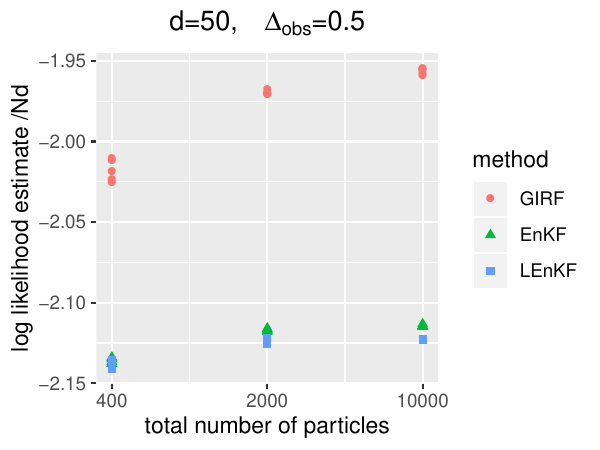}\\ 
  \includegraphics[width=.49\textwidth]{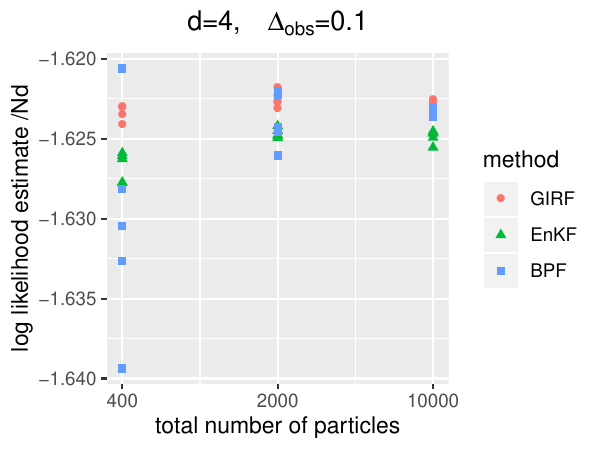} 
  \includegraphics[width=.49\textwidth]{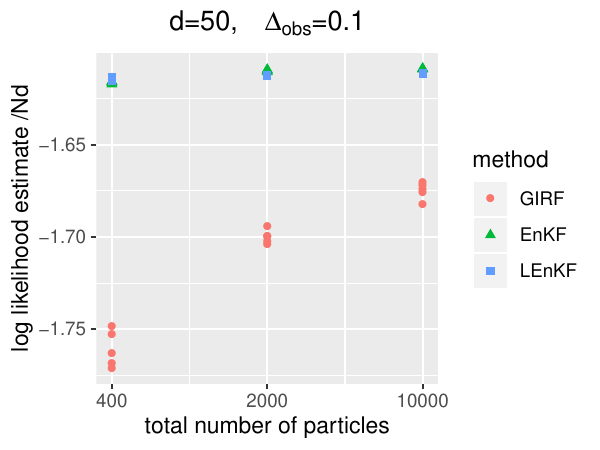}\\ 
  \caption{Log likelihood estimates per spatial unit per time ($\frac{\hat\ell}{N\times d}$) by GIRF, EnKF, a local ensemble Kalman filter (LEnKF), and the bootstrap particle filter (BPF) for stochastic Lorenz 96 examples with various dimensions and observation time intervals.}
  \label{fig:stoLorenz_GIRF_EnKF}
\end{figure}

We compared our implementation of GIRF with an ensemble Kalman filter (EnKF) for the generated data.
Our GIRF implementation used the guide function constructed via forty guide simulations, according to the quantile-based method \eqref{eqn:quantile_guide} and \eqref{eqn:sample_quantile_approx} with $\nl\sr=2$ and $K\sr=8$.
The guide simulations were made at every observation time when $\Delta_\text{obs}$ was 0.1 and at every time interval of 0.25 when $\Delta_\text{obs}$ was 0.5.
The likelihood of data was also estimated from EnKF using the Gaussian approximation to the empirical distribution of the particle swarm using the sample mean and the sample variance.
For a model with $d=4$, we also ran the bootstrap particle filter (BPF).
We ran each method using 400, 2000, and 10{,}000 particles.
The experiments with 10{,}000 particles for the BPF and for the GIRF ran five particle islands each comprising 2000 particles.
We used our \textsf C++ implementation for GIRF, and the BPF was implemented as a GIRF with $\nr\sr=1$ and $\nl\sr=1$.
For the EnKF, we used the \texttt{enkf} function in \textsf{R} package \pkg{pomp}, which speeds up computations using \textsf C~snippet declarations \citep{king2016statistical, king2019pomp}.

Figure~\ref{fig:stoLorenz_GIRF_EnKF} shows the log likelihood estimates by each method.
When the observations were made at intervals of $\Delta_\text{obs}=0.5$, the likelihood estimates by GIRF were higher than those by the EnKF.
This was due to the fact that the EnKF made Gaussian approximations to one-step forecast distributions $p(X_{t_n}|y_{1:n-1})$, which were moderately non-Gaussian. 
The likelihood estimates for the $d\sr=50$ dimensional model by GIRF using 400 particles, which took 27 minutes, was higher than those by the EnKF using 10{,}000 particles, which took 5.5 minutes.
The likelihood estimates by the EnKF showed a bias that did not go away as the number of particles increased.
For $d\sr=4$, the likelihood estimate by GIRF agreed with those by the BPF, which may be considered as a benchmark when filtering for low dimensional models.
The results for $\Delta_\text{obs}=0.5$ show that the GIRF can give better numerical results than the EnKF for nonlinear, non-Gaussian models for which one can construct a guide function that reasonably approximates forecast likelihoods.

When $\Delta_\text{obs}$ was 0.1 instead, the EnKF produced good results relative to the GIRF.
This was because our stochastic Lorenz 96 model behaved like a linear Gaussian model for this shorter observation time interval and the one-step forecast distribution $p(x_{t_n}|y_{1:n-1})$ could be well approximated by a Gaussian distribution.
For $d\sr=4$, the GIRF, the BPF, and the EnKF gave likelihood estimates that were close to each other, but the EnKF scaled better to $d\sr=50$ than the GIRF.
We remark that a longer observation time interval posed difference challenges for GIRF and the EnKF.
For the GIRF, the deterministic forecast simulations became less reliable as the forecast time length increased due to the chaotic property of the Lorenz 96 model.
For the EnKF, the one-step forecast distribution became increasingly non-Gaussian as the observation time interval increased due to the nonlinearity of the model.
This made local data assimilation, which was based on the assumption that the one-step forecast distribution was Gaussian, less accurate.

For the $d\sr=50$ dimensional model, we also ran a local ensemble Kalman filter (LEnKF) \citep{hunt2007efficient}.
Our implementation of the LEnKF used the observations at three neighboring spatial units on each side for a total of seven observations $y_n^{[i-3:i+3]}$ to update the $i$-th coordinate of the particles.
It also inflated the sample variance of the proposed particles by a factor of 1.1 by linearly perturbing the particles away from their sample mean.
Local implementation of the EnKF is commonly used to improve the numerical results in geophysical models where the dimension can be much higher than the number of particles.
In our relatively low-dimensional examples, however, both local implementation and variance inflation did not improve the numerical results.

\section{Theoretical results}\label{sec:theory}
We first show that the standard results for SMC apply to the GIRF defined in Algorithm~\ref{alg:GIRF}.
GIRF can be cast into the standard framework of particle filters by extending the latent space to $\mathbb{X}^2$ where the new latent variable is the pair $(X_{t_{n,s}},X_{t_{n,s-1}})$.
This extension is necessary because the resampling weights \eqref{eqn:wfunc} depend on both $X_{t_{n,s}}^j$ and $\tilde X_{t_{n,s-1}}^j$.
Likelihood estimates obtained from the standard particle filter are unbiased \citep{delmoral2001interacting}.
It follows that the likelihood estimates from GIRF are also unbiased for any guide function $\psi_{t_{n,s}}:\mathbb{X}\to \mathbb{R}^+$.
\begin{theorem}
\label{thm:unbiased} The likelihood estimate $\hat{\ell}$ of Algorithm~\ref{alg:GIRF} is unbiased for $\ell_{1:N}(y_{1:N})$.
\end{theorem}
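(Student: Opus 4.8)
The plan is to recognize GIRF as an ordinary sequential Monte Carlo algorithm running on an enlarged state space and then to invoke the classical unbiasedness of the SMC normalizing-constant estimate; the only genuine work is to check that the normalizing constant so estimated coincides with $\ell_{1:N}(y_{1:N})$.

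First I would set up the Feynman--Kac model on $\mathbb{X}^2$. Indexing the $N\nr$ sub-steps of Algorithm~\ref{alg:1} by pairs $(n,s)$ with $n\in 0\mycolon N{-}1$, $s\in 1\mycolon \nr$ taken in lexicographic order, I let the extended latent variable at sub-step $(n,s)$ be the pair $\zeta_{n,s} := (X_{t_{n,s}}, X_{t_{n,s-1}})$. The mutation kernel sends $\zeta_{n,s-1}=(x,\cdot)$ to $(X',x)$ by drawing $X'\sim K_{t_{n,s-1},t_{n,s}}(dx'\giventh x)$, which is a bona fide Markov kernel on $\mathbb{X}^2$ (it simply ignores the second coordinate of its argument), started from $X_{t_0}\sim P_{t_0}$. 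The potential at sub-step $(n,s)$ is the weight $w_{t_{n,s}}$ of equation~\eqref{eqn:wfunc}, which is a nonnegative function of $\zeta_{n,s}=(X^P_{t_{n,s}},X^F_{t_{n,s-1}})$ exactly as required. With these identifications the multinomial resampling in Algorithm~\ref{alg:1} (the step drawing $a^j$) is precisely the selection step of a standard particle filter, and the accumulated quantity $\hat\ell = \prod_{n,s}\big(J^{-1}\sum_j w^j\big)$ is exactly the standard SMC estimate of the model normalizing constant.

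Second, assuming the mild integrability that makes the weights have finite expectation, I would quote the classical result that this estimate is unbiased \citep{delmoral2001interacting, delmoral2004feynman}, giving $\mathbb{E}[\hat\ell] = \mathbb{E}\big[\prod_{n,s} w_{t_{n,s}}(X_{t_{n,s}},X_{t_{n,s-1}})\big]$, where the expectation on the right is under the law of the mutation chain, which by the decomposition $K_{t,t'}=K_{t,\tau_1}\cdots K_{\tau_n,t'}$ is just the law of $\{X_t\}$ restricted to $\mathbb{I}$. It then remains to collapse the product of weights. Substituting \eqref{eqn:wfunc}, within each observation interval the guide-function ratios telescope to $u_{t_{n+1}}(X_{t_{n+1}})/u_{t_n}(X_{t_n})$, and telescoping once more across intervals leaves $u_{t_N}(X_{t_N})/u_{t_0}(X_{t_0})$; the boundary conventions $u_{t_0}\equiv 1$ and $u_{t_N}=g_N(y_N\given\cdot)$ reduce this to $g_N(y_N\given X_{t_N})$. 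The remaining factors are the $g_n(y_n\given X_{t_n})$ contributed by the sub-steps with $t_{n,s-1}\in t_{1:N}$, that is the steps $(n,1)$ for $n\in 1\mycolon N{-}1$, so the whole product collapses to $\prod_{n=1}^N g_n(y_n\given X_{t_n})$ and hence $\mathbb{E}[\hat\ell]=\mathbb{E}\big[\prod_{n=1}^N g_n(y_n\given X_{t_n})\big]=\ell_{1:N}(y_{1:N})$ by definition.

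The conceptual content lies entirely in the recasting; once the extended chain is in place the unbiasedness is immediate from the cited theory. The one step deserving care is the telescoping bookkeeping: one must use a single guide value $u_{t_n}$ at each shared time point $t_n=t_{n-1,\nr}=t_{n,0}$ so that the end-of-interval numerator and the start-of-interval denominator cancel, and one must verify that the observation factors are counted exactly once each---$g_N$ arising from the terminal guide convention and $g_1,\dots,g_{N-1}$ from the indicator term in \eqref{eqn:wfunc}, with no spurious contribution at the non-observation initialization time $t_0$. I expect this bookkeeping, rather than any analytic difficulty, to be the main thing to get right.
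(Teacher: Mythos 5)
Your proposal is correct and takes essentially the same route as the paper's proof: both recast GIRF as a standard bootstrap particle filter on the extended space $\mathbb{X}^2$ with latent pairs $(X_{t_{n,s}},X_{t_{n,s-1}})$, invoke the classical unbiasedness of the SMC normalizing-constant estimate from \citet{delmoral2001interacting}, and collapse the product of weights via the telescoping of guide-function ratios together with the conventions $u_{t_0}\equiv 1$ and $u_{t_N}=g_N(y_N\given\cdot)$. Your explicit bookkeeping of which sub-steps contribute the factors $g_1,\dots,g_{N-1}$ is a correct elaboration of a step the paper leaves implicit.
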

\begin{proof}See Section~\ref{sec:proofthm1} in the supplementary material.\end{proof}
\noindent The consistency and the asymptotic normality of the filter estimates from GIRF also follow naturally from the standard particle filter theory \citep{chopin2004central, delmoral2004feynman}.
The results of the unbiasedness of likelihood estimates and the consistency of filtering distribution have been given for methods with intermediate resampling, but the scaling properties with respect to increasing dimension have not been established \citep{delmoral2015sequential, bloem2018random}.
In what follows, we examine the scaling properties of GIRF.

GIRF converts a filtering problem with highly informative observations into one that deals with a slower rate of incoming information, at the expense of operating on a refined time scale.
There are many results in the literature which concern the stability of particle filters \citep[see for example][]{delmoral2001stability, delmoral2004feynman, legland2004stability, whiteley2013stability, giraud2017nonasymptotic}.
However, these results do not directly address the scaling with respect to increasing dimension.
Another major issue in applying these results to the ``infill'' scenario we study in which the number of intermediate time steps $\nr$ is increasing is that the number of time steps needed for the mixing of the latent process conditional on data increases proportionally with $\nr$.
We provide a novel theoretical analysis of the scaling rate when the number of intermediate time steps grow linearly with the amount of information each observation carries, which in turn increases with the model dimension.
In particular, we provide a finite sample bound on the filtering error (Theorem~\ref{thm:main}) and  asymptotic bounds on the variance of the likelihood estimate (Theorem~\ref{thm:asympVar_l}) and filter estimates (Theorem~\ref{thm:asympVar_f}) for GIRF.
These bounds show how intermediate propagation and resampling and the guide function can remedy the otherwise problematic dimensional scaling properties of particle filters.

\subsection{Scaling properties when the guide function is exact}\label{sec:exact_guide}
Given the observations $y_{1:N}$ and for $t_n \sr< t \sr\leq t_{n+1}$, we initially consider the situation where the guide function matches the forecast likelihood of all future observations:
\begin{equation}
\psi^*_t(x_t) := p_{Y_{n+1:N}|X_t} (y_{n+1:N} \given x_t).
\label{eqn:exact_guide}\end{equation}
This is called the exact guide function.
We will show that the number of particles required for accurate filtering can scale polynomially in dimension $d$ under some assumptions if the exact guide function is used and $S\sr=d$.
Since the exact guide function is not generally computationally tractable, a theory for inexact guide functions will be developed in Section~\ref{sec:inexact_guide}.

\newtheorem*{assumptionOneS}{Assumption 1*}
\newcommand{\refassumptionOneS}{1* }
\begin{assumptionOneS}
  There exists $\cone^* \geq 1$ such that for every $s\sr\in 1\col S$, $n\sr\in 0\col N{-}1$, and $x\sr\in\mathbb X$,
  \begin{equation}
  \frac{\mathrm{Var}\left[ p(y_{n+1:N}\given X_{t_{n,s}}) \,\middle|\, X_{t_{n,s-1}}\sr=x\right]}{p(y_{n+1:N}\given X_{t_{n,s-1}}\sr=x)^2} \leq {\cone^*}^2-1.
  \label{eqn:assumption1s}\end{equation}
\end{assumptionOneS}
In \eqref{eqn:assumption1s}, the distribution of $X_{t_{n,s}}$ given $X_{t_{n,s-1}}\sr=x$ is understood as given by the law of the latent Markov process, unconditional on data.
Assumption~\refassumptionOneS asserts that the forecast likelihood of all future observations given $X_{t_{n,s}}$ does not deviate too much from the forecast likelihood given the value $X_{t_{n,s-1}}\sr=x$.
Note that $\cone^*$ depends on the length of the time interval $[t_{n,s-1},t_{n,s}]$, and thus on the number of intermediate steps $S$.
Assumption~\refassumptionOneS is related to the rate at which the information provided by future observations are processed by the filtering algorithm.

In what follows, we will assume that multinomial resampling is used.
Under multinomial resampling, the indices $a^j$ in Algorithm~\ref{alg:GIRF} are drawn independently of each other, given $\left\{w^j\giventh j\in 1\col J\right\}$.
\begin{theorem}
\label{thm:bound_exact_guide}
Suppose multinomial resampling and the exact guide function \eqref{eqn:exact_guide} are used in Algorithm~\ref{alg:GIRF}.
Also suppose that Assumption~\refassumptionOneS holds.
If $f$ is a measurable function such that $\lVert f \rVert_\infty \leq 1$ and $\cmarkov>1$ is an arbitrary constant, then we have
\begin{equation}
  \left\lvert \frac{1}{J} \sum_{j=1}^J f(\tilde X_{t_N}^j) - \mathbb{E}[f(X_{t_N}) | Y_{1:N}=y_{1:N}] \right\rvert
  \leq \frac{ 4 \cmarkov (\cone^*+1)}{\sqrt{J}} (NS+1)
  \label{eqn:bound_exact_guide}
\end{equation}
with probability at least $1-\frac{(2NS+1)(NS+1)}{\cmarkov^2}$, given that $\sqrt J \geq 8 \cmarkov (\cone^*+1) NS$.
\end{theorem}
\begin{proof}
  See Section~\ref{sec:proofthmmain} in the supplementary material.
\end{proof}
Theorem~\ref{thm:bound_exact_guide} gives a bound on the MC error in filtering estimates when the GIRF approach is used with the exact guide function.
If we are to keep the probability $\frac{(2NS+1)(NS+1)}{\cmarkov^2}$ with which the bound is violated at a fixed level, the number $\cmarkov$ needs to increase linearly with $\nr$, and thus the error bound increases at a rate of at most $O(\nr^2)$.
We will show below that if we take $S\sr=d$, $\cone^*$ can be uniformly bounded (i.e., $\mathcal O(1)$) as the dimension $d$ increases, under certain circumstances.
Theorem~\ref{thm:bound_exact_guide} implies that if $S\sr=d$ and $\cone^*=\mathcal O(1)$ in $d$, the MC error will scale at most polynomially in $d$.
We note that if there are no intermediate propagation and resampling steps, that is if $S\sr=1$, $\cone^*$ typically scales exponentially in $d$.

\begin{prop}\label{prop:conescaling}
  Consider a POMP model consisting of $d$ independent one dimensional latent process $\{X_t\} = \{X_t^{[1:d]}\}$ and measurement processes $\{Y_n\} = \{Y_n^{[1:d]}\}$.
  Let each observation be denoted by $y_n=y_n^{[1:d]}$.
  Suppose that there exists $d$ positive real numbers $\zeta^{[1:d]}$ such that for every $i\sr\in 1\col d$, $s\sr\in 1\col S$, $n \sr\in 0\col N{-}1$, $\tau \sr\in [t_{n,s-1},t_{n,s}]$, and $x\sr\in\mathbb X$,
  \begin{equation}
  \frac{d}{d\tau} \log \mathrm{Var}\left[ p(y_{n+1:N}^{[i]}\given X_\tau^{[i]}) \,\middle|\, X_{t_{n,s-1}}^{[i]}\sr=x^{[i]} \right] \leq 2\zeta^{[i]}.
  \label{eqn:prop_conescaling_cond}
  \end{equation}
  Suppose further that
  \[
  t_{n,s}-t_{n,s-1} \leq \frac{\Delta}{d}
  \]
  for some $\Delta\sr>0$ and all $s\sr\in1\col S$ and $n\sr\in 0\col N{-}1$.
  Then in Assumption~\refassumptionOneS, we can set
  \begin{equation}
  \cone^* = \exp\left\{\frac{1}{d}\sum_{i=1}^d \zeta^{[i]} \cdot \Delta\right\}.
  \label{eqn:coneS}\end{equation}
Thus if $\sum_{i=1}^d \zeta^{[i]} = \mathcal O(d)$, $\cone^*$ in \eqref{eqn:coneS} is $\mathcal O(1)$.
\end{prop}
\begin{proof}
  See supplementary section~\ref{sec:proofprops}.
\end{proof}

If we set $S\sr=d$ such that $|t_{n,s}-t_{n,s-1}| = \mathcal O(\frac{1}{d})$, Proposition~\ref{prop:conescaling} says that the MC error bound in Theorem~\ref{thm:bound_exact_guide} scales polynomially in $d$ for independent models.
However, we note that the independence assumption is not crucial; see a correlated Brownian motion example in the supplemenary section~\ref{sec:conescaling}.

Assumption~\refassumptionOneS takes explicit advantage of the requirement for the GIRF method that the latent process operates in continuous time.
The latent process transition kernel that is non-deterministic over intermediate time intervals provides the randomness necessary for gradually guiding the particles to the next guided filter distribution.
As a counterexample, consider a case where the latent process is deterministic except for making random jumps at fixed observation times $t_{1:N}$.
We suppose that the sample paths are right-continuous at $t_{1:N}$.
Due to the deterministic evolution of the latent process in the interval $[t_n, t_{n+1})$, we have
\[
\text{Var}\left[ p(y_{n+1:N}\given X_{t_{n,s}}) \,\middle|\, X_{t_{n,s-1}} \right] = 0
\]
for $s\sr\in 1\col S{-}1$.
However, for a POMP model consisting of $d$ independent processes and for $s\sr=S$, we have
\[\begin{split}
\frac{\text{Var}\left[p(y_{n+1:N} \given X_{t_{n+1}}) \,\middle|\, X_{t_{n,S-1}}\sr=x_{t_{n,S-1}}\right]} {p(y_{n+1:N}\given X_{t_{n,S-1}}\sr=x_{t_{n,S-1}})^2}
&= \frac{\text{Var}\left[p(y_{n+1:N} \given X_{t_{n+1}}) \,\middle|\, X_{t_n}\sr=x_{t_n}\right]} {p(y_{n+1:N}\given X_{t_n}\sr=x_{t_n})^2}\\
&= \prod_{i=1}^d \frac{\E\left[ p(y_{n+1:N}^{[i]}\given X_{t_{n+1}}^{[i]})^2 \,\middle|\, X_{t_n}^{[i]}\sr=x_{t_n}^{[i]} \right]} {p(y_{n+1:N}^{[i]}\given X_{t_n}^{[i]}\sr=x_{t_n}^{[i]})^2} -1,
\end{split}\]
where $x_{t_n}$ is a value of the latent process at $t_n$ from which the deterministic evolution leads to $x_{t_{n,S-1}}$ at $t_{n,S-1}$.
Since the product of $d$ terms in the right hand side generally scales exponentially in $d$, the bound ${\cone^*}^2$ also scales exponentially.
We see that the continuously random property of the latent process is necessary for Algorithm~\ref{alg:GIRF} to be able to scale favorably.

\subsection{Scaling properties when the guide function is not exact}\label{sec:inexact_guide}
We now consider the case when $\psi_t$ is not exact.
The MC error is affected by the inaccurate approximation of the forecast likelihoods $p(y_{n+1:N} \given x_{t_{n,s}})$ by $\psi_{t_{n,s}}(x_{t_{n,s}})$.
In order to derive a bound on the MC error similar to that in Theorem~\ref{thm:bound_exact_guide}, we introduce two technical assumptions.
The first assumption is analogous to Assumption~\refassumptionOneS in Section~\ref{sec:exact_guide}.
\begin{assumption}
  \label{assumption1}
  There exists $\cone \sr\geq 1$ such that for all $s,s'\sr\in 1\col S$ and $n,n'\sr\in 0\col N{-}1$ such that $t_{n,s} \sr\leq t_{n',s'}$ and for every $x\sr\in \mathbb X$,
  \[
  \frac{\mathrm{Var}\left[ \E\left(\psi_{t_{n',s'}}(X_{t_{n',s'}}) \prod_{m=n+1}^{n'} g_m(y_m\given X_{t_m}) \,\middle|\, X_{t_{n,s}}\right) \,\middle|\, X_{t_{n,s-1}}\sr=x\right]} {\E\left(\psi_{t_{n',s'}}(X_{t_{n',s'}}) \prod_{m=n+1}^{n'} g_m(y_m\given X_{t_m}) \,\middle|\, X_{t_{n,s-1}}\sr=x \right)^2} \leq \cone^2-1.
  \]
\end{assumption}
If $\psi=\psi^*$, we have
\begin{equation}
\textstyle \E\left[\psi_{t_{n',s'}}(X_{t_{n',s'}}) \prod_{m=n+1}^{n'} g_m(y_m\given X_{t_m}) \,\middle|\, X_{t_{n,s}}\sr=x\right]
=p(y_{n+1:N}\given X_{t_{n,s}}\sr=x) = \psi^*_{t_{n,s}}(x),
\label{eqn:Qpsi_exact}
\end{equation}
and Assumption~\ref{assumption1} simplifies to Assumption~\refassumptionOneS.
If $\psi_{t_{n',s'}}(x)$ approximates the forecast likelihood $p(y_{n'+1:n'+\nl}\given X_{t_{n',s'}}\sr=x)$, the quantity $\E\left[ \psi_{t_{n',s'}}(X_{t_{n',s'}}) \prod_{m=n+1}^{n'} g_m(y_m \given X_{t_m}) \,\middle|\, X_{t_{n,s}}\sr=x \right]$ in turn gives an approximation to $p(y_{n+1:n'+\nl}\given X_{t_{n,s}}\sr=x)$.

The second assumption concerns how closely the guide function $\psi_t$ approximates the forecast likelihood of future observations.
For a constant $c\sr\geq 1$ and a subset $\mathcal C$ of $\mathbb X$, we define $\text{Osc}(c\giventh \mathcal C)$ to be a class of positive functions $f$ on $\mathbb X$ such that
\[
c\cdot \inf_{x\in \mathcal C} f(x) \geq \sup_{x\in\mathbb X} f(x).
\]
\renewcommand{\labelenumi}{(\roman{enumi})}
\begin{assumption}
  \label{assumption2}
  There exist constants $\ctwo \sr\geq 1$, $\rho \sr\in (0,1]$, and a collection of regions $\mathcal C_{t_{n,s}} \sr\in \mathcal X$ for $s\sr\in 1\col S$ and $n\sr\in 0\col N{-}1$ such that the following hold:
\begin{enumerate}
\item For all $s\sr\in 1\col S$ and $n\sr\in 0\col N{-}1$,
  \[
  P_{t_{n,s}}^\psi(\mathcal C_{t_{n,s}}) \geq \rho.
  \]
\item For all $s,s'\sr\in 1\col S$ and $n,n'\sr\in 0\col N{-}1$,
  \begin{equation}
  \frac{ \E\left[ \psi_{t_{n',s'}}(X_{t_{n',s'}}) \prod_{m=n+1}^{n'} g_m(y_m \given X_{t_m}) \,\middle|\, X_{t_{n,s}}\sr=x \right]} {\psi_{t_{n,s}}(x)} \in \mathrm{Osc}(\ctwo \giventh \mathcal C_{t_{n,s}}).
  \label{eqn:assumption2ii}
  \end{equation}
\end{enumerate}
\end{assumption}
\renewcommand{\labelenumi}{(\alph{enumi})}
A value of $\ctwo$ that is close to unity indicates that $\psi_{t_{n,s}}(x)$ is approximately proportional to the quantity $\E\left[ \psi_{t_{n',s'}}(X_{t_{n',s'}}) \prod_{m=n+1}^{n'} g_m(y_m \given X_{t_m}) \,\middle|\, X_{t_{n,s}}\sr=x \right]$.
If $\psi=\psi^*$, due to \eqref{eqn:Qpsi_exact}, one can take $\ctwo=1$, $\mathcal C_{t_{n,s}}=\mathbb X$ for all $t_{n,s}$, and $\rho=1$.
The fact that a constant multiple of the infimum of the ratio in \eqref{eqn:assumption2ii} over $\mathcal C_{t_{n,s}}$ is lower bounded by the global supremum indicates that the guide function $\psi_{t_{n,s}}$ can overestimate the forecast likelihood outside $\mathcal C_{t_{n,s}}$.
For instance, if we consider the case where the region $\mathcal C_{t_{n,s}}$ is defined as $\{ x \sr\in \mathbb X \giventh \psi_{t_{n,s}} \sr> c\}$ for some value $c\sr>0$, then Assumption~\ref{assumption2} (ii) might be interpreted as that $\psi_{t_{n,s}}$ has tails at least as thick as those of the numerator in \eqref{eqn:assumption2ii}.
Assumption~\ref{assumption2} (i) says that the region $\mathcal C_{t_{n,s}}$ has to carry a probability mass of at least $\rho$ with respect to $P_{t_{n,s}}^\psi$.

Under Assumptions~\ref{assumption1} and \ref{assumption2}, the MC error in filtering estimates can be bounded as follows.
\begin{theorem}
  \label{thm:main}
Suppose multinomial resampling is used in Algorithm~\ref{alg:GIRF}.
Also suppose that Assumptions~\ref{assumption1} and \ref{assumption2} hold.
If $f$ is a measurable function such that $\lVert f \rVert_\infty \leq 1$ and $\cmarkov>1$ is an arbitrary constant, then we have
\begin{equation}
  \left\lvert \frac{1}{J} \sum_{j=1}^J f(\tilde X_{t_N}^j) - \mathbb{E}[f(X_{t_N}) | Y_{1:N}=y_{1:N}] \right\rvert
  \leq \frac{ 4 \cmarkov \ctwo(\cone+1)}{\rho\sqrt{J}} (NS+1)
  \label{eqn:mainbound}
\end{equation}
with probability at least $1-\frac{(2NS+1)(NS+1)}{\cmarkov^2}$, given that $\sqrt J \geq 8 \rho^{-2} \cmarkov \ctwo (\cone+1) NS$.
\end{theorem}
\begin{proof}
  See supplementary section~\ref{sec:proofthmmain}.
\end{proof}

When $\psi=\psi^*$, Theorem~\ref{thm:main} reduces to Theorem~\ref{thm:bound_exact_guide}.
When $\psi\neq \psi^*$, it is possible to show a result similar to Proposition~\ref{prop:conescaling} and claim that $\cone$ is uniformly bounded in $d$ for independent models under certain conditions, provided that $S=d$.
Unfortunately, $\ctwo$ scales exponentially in $d$.
Nevertheless, taking $\psi_{t_{n,s}}$ to be an approximation to $p(y_{n+1:n+\nl}\given X_{t_{n,s}}\sr=x)$ can greatly reduce the rate of exponential growth of $\ctwo$ compared to the case
\[
\psi_{t_{n,s}}(x) = \left\{ \begin{array}{ll} 1 & \text{for }s\sr\in 1\col S{-}1\\
  g_{n+1}(y_{n+1}\given X_{t_{n,S}}\sr=x) & \text{for } s\sr=S,\end{array}
  \right.
\]
which corresponds to the bootstrap particle filter.
As shown in Section~\ref{sec:impl}, even rough approximations for $\psi$, such as those made by ignoring the correlation between components (Table~\ref{tab:BM_LSE_corr} for the correlated Brownian motion example) or by simulation-based moment matching (stochastic Lorenz 96 example), can extend the dimensionality of the models for which reasonably good filtering estimates can be obtained.

A sufficient condition for Assumption~\ref{assumption2} can be obtained based on the mixing property of the latent process conditional on data.
We say that the latent process mixes well over the interval $[t_{n,s}, t_{n+\nl}]$ conditional on data if the conditional expectation $\mathbb E \left[ f(X_{t'}) \middle\vert y_{n+1:n+\nl}, X_{t_{n,s}}\sr=x \right]$ for $t'\sr\geq t_{n+L}$ does not vary substantially across the space as a function of $x$.
Loosely speaking, this condition implies that the state $X_{t_{n,s}}$ does not influence the future state $X_{t'}$ much, given the observations $y_{n+1:n+\nl}$.
This condition is related to the $\varphi$-mixing of the conditional law of the latent process $\{X_t\}$ given $y_{n+1:n+L}$ between the two $\sigma$-algebras generated by $\{X_t \giventh t \sr\geq t_{n+L}\}$ and $\{X_t \giventh t \sr\leq t_{n,s}\}$ \citep[p.~260]{billingsley1999convergence}.
The following proposition supports taking $\psi_{t_{n,s}}$ as an approximation to $p(y_{n+1:n+\nl})$, provided that the latent process mixes over $[t_{n,s},t_{n+\nl}]$ conditional on data.
\begin{prop}\label{prop:assump2suffcond}
  Let $s,s'\sr\in 1\col S$ and $n,n'\sr\in 0\col N{-}1$ be such that $n' \geq n{+}\nl$ and let $\mathcal C_{t_{n,s}}\sr\in \mathcal X$ be given.
  Suppose that the following two conditions hold for some constants $C_{2,a}, C_{2,b} \sr\geq 1$:
   \begin{equation}
    \textstyle \E\left[\psi_{t_{n',s'}}(X_{t_{n',s'}})\prod_{m=n+\nl+1}^{n'}g_m(y_m\given X_{t_m}) \,\middle|\, y_{n+1:n+L}, X_{t_{n,s}}\sr=x\right] \in \mathrm{Osc}(C_{2,a}\giventh \mathcal C_{t_{n,s}}),
    \label{eqn:assump2suff_1}
    \end{equation}
    \begin{equation}
    \frac{p(y_{n+1:n+\nl}\given X_{t_{n,s}}\sr=x)}{\psi_{t_{n,s}}(x)}
    \in \mathrm{Osc}(C_{2,b}\giventh \mathcal C_{t_{n,s}}).
    \label{eqn:assump2suff_2}
    \end{equation}
  Then we have
  \begin{equation}
  \frac{ \E\left[ \psi_{t_{n',s'}}(X_{t_{n',s'}}) \prod_{m=n+1}^{n'} g_m(y_m \given X_{t_m}) \,\middle|\, X_{t_{n,s}}\sr=x \right]} {\psi_{t_{n,s}}(x)} \in \mathrm{Osc}(C_{2,a} C_{2,b} \giventh \mathcal C_{t_{n,s}}).
  \label{eqn:assump2suff_result}
  \end{equation}
\end{prop}
\begin{proof} See supplementary section~\ref{sec:proofprops}.\end{proof}
The condition \eqref{eqn:assump2suff_1} states that the latent process mixes over $[t_{n,s},t_{n+\nl}]$ conditional on data, with respect to a specific function $\psi_{t_{n',s'}}(X_{t_{n',s'}})\prod_{m=n+\nl+1}^{n'}g_m(y_m\given X_{t_m})$ of future states.
The condition \eqref{eqn:assump2suff_2} says $\psi_{t_{n,s}}$ approximates the forecast likelihood of $\nl$ future observations.
Provided these two conditions, \eqref{eqn:assump2suff_result} says the condition \eqref{eqn:assumption2ii} in Assumption~\ref{assumption2} (ii) holds for $\ctwo= C_{2,a}C_{2,b}$.
Proposition~\ref{prop:assump2suffcond} implies that if the latent process mixes slowly conditional on data, the guide function will need to approximate the forecast likelihood of a large number of future observations.
Since the approximation of the forecast likelihood of a large number of future observations can be practically difficult, the MC error in filtering estimates is likely to increase.
This situation can be intuitively understood as that if the latent process has long memory, it is difficult to know early enough which particles will be consistent with distant future observations.

The implications of the theoretical results in this section may be summarized as follows.
Assumption~\ref{assumption1} concerns the source of filtering error coming from the MC randomness in propagation steps.
This source of error can be controlled by carrying out intermediate propagation and resampling with $S\sr=d$.
By contrast, the auxiliary particle filter, which is equivalent to Algorithm~\ref{alg:GIRF} with $\nr\sr=1$, scales poorly even when equipped with a good guide function, as indicated by both theory and practice \citep{snyder2015performance}.
Assumption~\ref{assumption2} bounds the source of filtering error originating from targeting the guided filter distribution $P^\psi_t$ instead of the smoothing distribution $p(x_t\given y_{1:N})$.
The filtering error can be reduced by making accurate approximations to forecast likelihoods, reducing $\ctwo$.
If mixing of the latent process conditional on data happens fast, it may be practically feasible to use $\psi$ that approximates the forecast likelihood of a few number of future observations.

We present two results on the asymptotic normality of the MC error in the likelihood estimate (Theorem~\ref{thm:asympVar_l}) and the filtering estimates (Theorem~\ref{thm:asympVar_f}).
Under Assumptions~\ref{assumption1} and \ref{assumption2}, we derive upper bounds on the asymptotic variances of these quantities.
The connection with Assumptions~\ref{assumption1} and \ref{assumption2} is the novel contribution of these results, since the asymptotic normality itself follows directly from existing results in the literature \citep[e.g., Section 9 in][]{delmoral2004feynman}.
The proofs are given in supplementary section~\ref{sec:asympVar_l}.
\begin{theorem}
  \label{thm:asympVar_l}
  In the limit where the particle size $J$ tends to infinity, the likelihood estimate $\hat\ell$ from GIRF (Algorithm~\ref{alg:GIRF}) converges in distribution to a normal distribution:
  \[
  \sqrt J \left( \frac{\hat \ell}{\ell_{1:N}(y_{1:N})} -1 \right) \Longrightarrow \mathcal N(0, \mathcal V).
  \]
  Under Assumptions~\ref{assumption1} and \ref{assumption2}, the asymptotic variance is bounded above by
  \[
  \mathcal V < NS \left( \frac{\cone^2 \ctwo^2}{\rho^2} -1 \right).
  \]
  An application of the delta method leads to the asymptotic normality of the log likelihood estimate \citep{bickel2015mathematical}:
  \[
  \sqrt J \left( \log \hat\ell - \log\ell_{1:N}(y_{1:N}) \right) \Longrightarrow \mathcal N(0, \mathcal V).
  \]
\end{theorem}
\begin{theorem}
  \label{thm:asympVar_f}
  In the limit where the particle size $J$ tends to infinity, the following asymptotic normality holds for every measurable function $f: \mathbb X \to \mathbb R$ such that $\Vert f \Vert_\infty \leq 1$\emph{:}
  \[
  \sqrt J \left( \frac{1}{J} \sum_{j=1}^J f(\tilde X_{t_N}^j) - \mathbb E[f\given Y_{1:N} = y_{1:N}] \right) \Longrightarrow \mathcal N(0, \mathcal W(f)).
  \]
  Under Assumptions~\ref{assumption1} and \ref{assumption2}, the asymptotic variance is bounded above by
  \[
  \mathcal W(f) < 1 + 4 NS \frac{\cone^2 \ctwo^2}{\rho^2}.
  \]
\end{theorem}

\section{Parameter inference using the GIRF}\label{sec:IF2}
Being a Monte Carlo algorithm that yields unbiased estimates of the likelihood of data, GIRF (Algorithm~\ref{alg:GIRF}) can be easily combined with existing parameter inference methods that build upon the particle filter.
These parameter estimation methods include particle Markov chain Monte Carlo (PMCMC) \citep{andrieu2010particle}, SMC$^2$ \citep{chopin2013smc2}, and iterated filtering \citep{ionides2015inference}.
For high-dimensional POMP models, likelihood estimates often have large amount of Monte Carlo error, for any feasible amount of Monte Carlo effort, even when filtering is successful.
This prevents the use of PMCMC, which requires a standard deviation order of 1 log unit \citep{doucet2015efficient}.
In this paper, we will focus on parameter estimation carried out by iterated filtering.
We will show that iterated filtering, together with Monte Carlo adjusted profile methodology by \citet{ionides2017monte}, is able to operate successfully in the presence of relatively high levels of Monte Carlo error. 

The iterated filtering approach of \citet{ionides2015inference} is a plug-and-play parameter estimation algorithm that finds the maximum likelihood estimate (MLE) of multi-dimensional parameters via an SMC approximation to an iterated, perturbed Bayes map. 
This algorithm, when implemented via a plug-and-play SMC filtering approach, provides plug-and-play inference on unknown model parameters.
Iterated filtering runs a sequence of particle filter on the augmented space comprising the latent variable and the parameter, where the parameters are subject to random perturbations at each time point. 
The size of perturbations decrease over iterations to induce convergence. 
In the limit where the perturbation size approaches zero, \citet{ionides2015inference} showed that the distribution of filtered parameters approaches a point mass at the MLE under regularity conditions.

\begin{figure}[t!]
  \centering
  \scalebox{1}{\begin{minipage}{\textwidth}
\begin{algorithm}[H]
  \SetKwInOut{Input}{Input}\SetKwInOut{Output}{Output}
  \Input{data, $y_{1:N}$;
  simulator for $p(x_{t_0}\giventh \theta)$;
  simulator for $p(x_{t_{n,s}}\given x_{t_{n,s-1}} \giventh \theta)$;
  evaluator for $g_n(y_n\given  x_{t_n}, \theta)$;
  evaluator for $\psi_{t_{n,s}}(x_{t_{n,s}}, \theta )$;
  number of particles, $J$;
  initial parameter swarm, $\Theta^{0,1:J}$;
  perturbation kernel for initial value parameter, $\kappa_0(d\theta\giventh  \phi, \sigma)$;
  perturbation kernel, $\kappa_{n,s}(d\theta\giventh  \phi, \sigma)$;
  number of iterations, $M$;
  sequence of perturbation sizes, $\sigma_{1\col M}$ }
  \vspace{1ex}
  \Output{final parameter swarm $\Theta^{M,1:J}$ }
  \vspace{1ex}
  \For {$m \gets 1\col M$}{
    Run Algorithm~\ref{alg:GIRF} on the extended latent space $(X_{t_{n,s}}, \Theta_{t_{n,s}}^m)$ with initial draws from \eqref{eqn:IF2initialization} and subsequent draws from \eqref{eqn:IF2transition}\\
    Set $\Theta^{m,j} = \tilde \Theta_{t_N}^{m,j}$ for $j\in 1\col J$
  }
\caption{An iterated guided intermediate resampling filter (iGIRF)}
\label{alg:iGIRF}
\end{algorithm}
  \end{minipage}}
\end{figure}

Algorithm~\ref{alg:iGIRF} presents an iterated guided intermediate resampling filter (iGIRF).
The algorithm starts with an initial set of parameters $\left\{\Theta^{0,j} \giventh j\in 1\col J\right\}$.
At the beginning of the $m$-th iteration, the parameter component of each particle is perturbed from its current position $\Theta^{m-1,j}$ with kernel $\kappa_0$ independently for each $j\in 1\col J$.
A pre-set decreasing sequence $(\sigma_m)_{m=1:M}$ determines the size of perturbation.
The initial latent variables $\tilde X_{t_0}^{1:J}$ are drawn from the initial latent distributions parameterized by the perturbed parameters $\tilde \Theta_{t_0}^{m,1:J}$, as follows:
\begin{equation}
  \tilde \Theta_{t_0}^{m,j} \sim \kappa_0(\cdot \giventh \Theta^{m-1,j}, \sigma_m),
  \quad \tilde X_{t_0}^j \sim p_{X_{t_0}}(\cdot \giventh \tilde \Theta_{t_0}^{m,j}), \quad j\sr\in1\col J.\label{eqn:IF2initialization}\end{equation}
Parameters are perturbed at each intermediate time $t_{n,s}$ with kernel $\kappa_{n,s}$, and the states are then drawn from the parameterized transition kernel:
\begin{equation}
  \Theta_{t_{n,s}}^{m,j} \sim \kappa_{n,s}(\cdot \giventh \tilde \Theta_{t_{n,s-1}}^{m,j}, \sigma_m),
  \quad X_{t_{n,s}}^j \sim p_{X_{t_{n,s}}\given X_{t_{n,s-1}}}(\cdot \given \tilde X_{t_{n,s-1}}^j, \Theta_{t_{n,s}}^{m,j}), \quad j\sr\in 1\col J.
\label{eqn:IF2transition}\end{equation}
These perturbations define an extended POMP model for $(X_{t_{n,s}}, \Theta_{t_{n,s}}^m)$, and the weighting and resampling steps are carried out on this extended model following GIRF (Algorithm~\ref{alg:GIRF}).
At the end of filtering, the parameter swarm $\tilde \Theta_{t_N}^{m,j}$ are set as $\Theta^{m,j}$.
After $M$ iterations, the final parameter swarm $\Theta^{M,j}$ is considered to be a collection of numerical approximations of the MLE.

Our implementation of iGIRF uses Gaussian parameter perturbations.
For parameters with interval constraints, we apply certain transformations beforehand such as taking the logarithm for positive parameters to ensure that Gaussian perturbations do not violate the constraints.
Our examples require us to consider two forms for the kernel $\kappa_{n,s}$.
\emph{Initial value parameters} (IVPs) are perturbed only by $\kappa_0$, and all other $\kappa_{n,s}$ have a point mass at the identity for the IVPs.
IVPs are parameters which encode the value of $X_{t_0}$ but play no subsequent role in the dynamics of the system.
For our examples of iGIRF, all parameters other than IVPs use a non-singular kernel which does not depend on $n$ and $s$, and we call these \emph{regular parameters}.
Intuitively, treating parameters as regular is appropriate in iGIRF if information about the parameters arrives at a steady rate through the time series.

\subsection{Numerical results}\label{sec:paramEst_numerical}
\subsubsection{Stochastic Lorenz 96 model}
\begin{figure}
  \centering
  \begin{subfigure}[t]{.48\textwidth}
    \includegraphics[width=\linewidth]{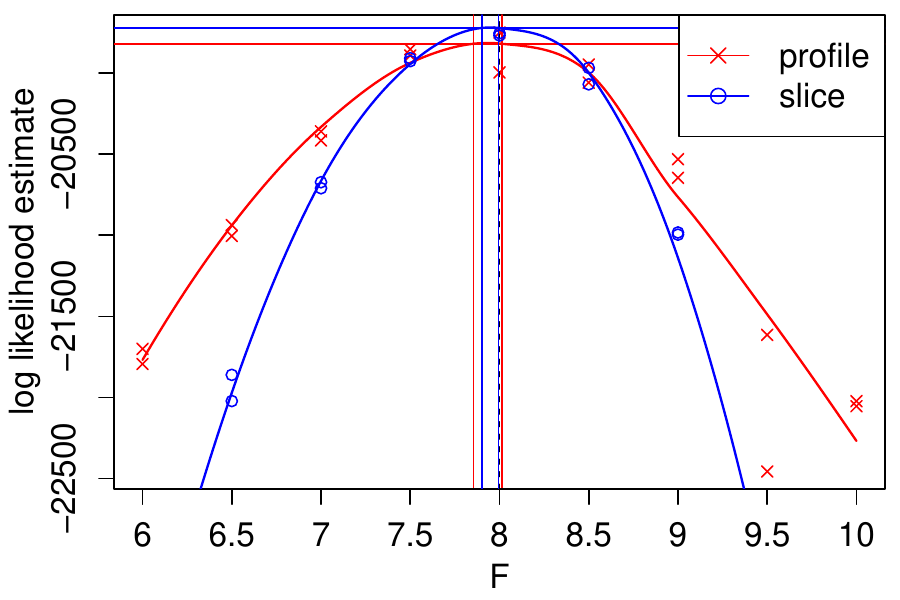}
    \caption{}
    \label{fig:Lorenz_sl_pr_d50}
  \end{subfigure}
  \begin{subfigure}[t]{.48\textwidth}
    \centering
    \includegraphics[width=\linewidth]{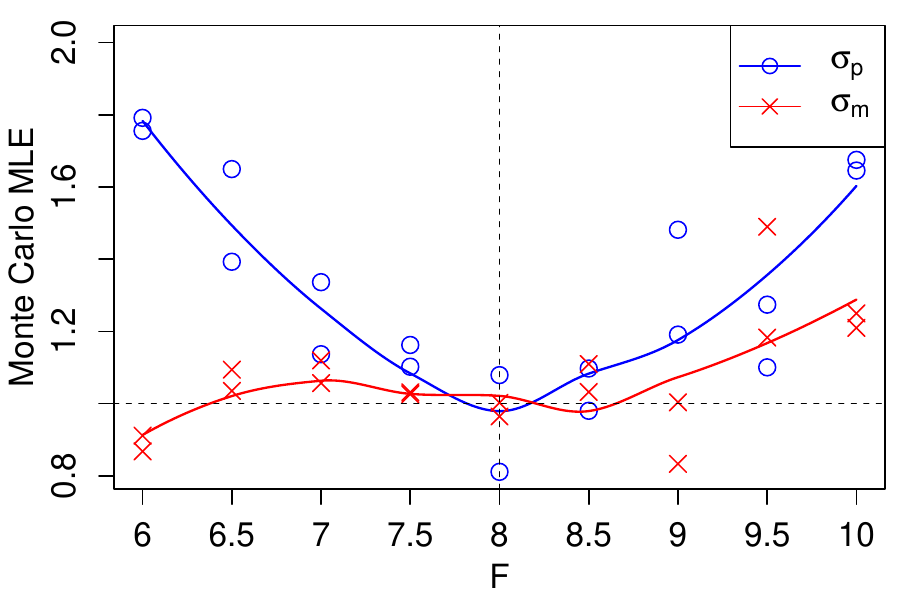}
    \caption{}
    \label{fig:Lorenz_MCMLE_d50}
  \end{subfigure}
  \caption{Inference on the fifty dimensional stochastic Lorenz 96 model. (a) Estimated slice and profile likelihood curves and Monte Carlo confidence intervals for $F$. (b) Monte Carlo MLE for $\sigma_p$ and $\sigma_m$.}
  \label{fig:Lorenz_d50}
\end{figure}

In order to test the parameter estimation capability of iGIRF, we made inference on $F$ with or without the knowledge of $\sigma_p$ and $\sigma_m$ from the data for the fifty dimensional stochastic Lorenz 96 model considered in Section~\ref{sec:Lorenz}.
The likelihoods of data were estimated at values of $F$ between 6.0 and 10.0 with intervals of 0.5 (Figure~\ref{fig:Lorenz_d50}).
The guide function was constructed according to \eqref{eqn:guide_indep} and \eqref{eqn:fcvar_linapprox} using forty guide simulations.
The likelihoods estimated at $\sigma_p\sr=\sigma_m\sr=1$ were used to estimate the slice likelihood curve.
We also estimated the MLEs for $\sigma_p$ and $\sigma_m$ using iGIRF (Algorithm~\ref{alg:iGIRF}) and estimated the likelihoods at the obtained Monte Carlo MLE using Algorithm~\ref{alg:GIRF} to approximate the profile likelihood curve.
The Monte Carlo MLE was taken to be the mean value of the parameter swarm at the end of the twentieth iteration in Algorithm~\ref{alg:iGIRF} (i.e., $M{\,=\,}20$).
The estimation at each value of $F$ was repeated twice independently.
Five particle islands with two thousand particles each were used to estimate the slice and the profile likelihood estimates.
We used $\nr\sr=50$ intermediate steps per observation interval and $\nl\sr=2$ future observations for the guide function.

We fit smooth curves through the estimated likelihoods using a non-parametric local regression procedure. 
We further constructed approximate 95$\%$ confidence intervals for $F$ based on locally quadratic fits through the likelihood estimates around the maximum of the smoothed fits, following the procedure proposed in \citet{ionides2017monte}.
This procedure further developed methods proposed by \citet{diggle1984monte} that enable parameter inference from models that are implicitly defined by simulation algorithms.
We give more details here in order to make the explanation of this procedure self-contained.
When the likelihood of data from a one-parameter model can be exactly evaluated, the 95$\%$-confidence interval for the maximum likelihood estimate of the parameter can be obtained by a cut-off on the likelihood curve at
$\frac{z_{0.975}^2}{2} = 1.92,$
where $z_{0.975}$ is the 0.975 quantile of the standard normal distribution.
In large and complex models where the likelihoods of data are estimated with Monte Carlo methods with non-negligible amount of error, the uncertainty in the likelihood estimates has to be taken into account in computing the cut-off. 
The procedure for constructing the Monte Carlo adjusted profile (MCAP) confidence intervals are as follows.
We assume that the Monte Carlo profile points $\breve{\ell}_{1:K}^\mathrm{P}$ are evaluated at $\vartheta_{1:K}$.
We fit a smooth curve $\breve{\ell}^\mathrm{S}(\vartheta)$ through the profile points using a local smoother, such as the \textsf{R} function \texttt{loess} \citep[implemented in \textsf{R}-3.4.1]{cleveland1992local}.
The \texttt{loess} function locally fits polynomial curves by giving less weights to points farther away from the point being estimated.
The point $\breve{\vartheta}$ at which the maximum of the smoothed curve $\breve{\ell}^\mathrm{S}$ is attained can be taken as the MLE of the parameter $\vartheta$.
In order to quantify the Monte Carlo error in the estimated maximum likelihood $\breve{\ell}^\mathrm{S}(\breve{\vartheta})$, we make a local quadratic fit near the maximum, using the weights that were used in evaluating the smoothed curve $\breve{\ell}^\mathrm{S}$ at $\breve{\vartheta}$.
Write the fitted quadratic equation as $-\breve{a}\vartheta^2 + \breve{b}\vartheta + \breve{c}$.
The variance and covariance of the coefficients $\breve{\text{Var}}[\breve{a}]$, $\breve{\text{Var}}[\breve{b}]$, and $\breve{\text{Cov}}[\breve{a}, \breve{b}]$ can be obtained as usual.
Using the delta method, the standard error of the maximum $\frac{\breve{b}}{2\breve{a}}$ can be estimated as
$$\text{SE}_{\text{mc}}^2 = \frac{1}{4\breve{a}^2} \left( \breve{\text{Var}}[\breve{b}]-\frac{2\breve{b}}{\breve{a}}\breve{\text{Cov}}[\breve{a},\breve{b}]+\frac{\breve{b}^2}{\breve{a}^2}\breve{\text{Var}}[\breve{a}]\right)$$
\citep{bickel2015mathematical}.
On the other hand, the statistical error originating from the randomness in data can be estimated with the usual formula
$$\text{SE}_{\text{stat}} = \frac{1}{\sqrt{2\breve{a}}}.$$
Assuming that the size of the Monte Carlo error is roughly the same across the possible realizations of the data, we can reasonably approximate the total standard error of the Monte Carlo maximum likelihood estimate as
$$\text{SE}_{\text{total}} = \sqrt{\text{SE}_{\text{stat}}^2 + \text{SE}_{\text{mc}}^2}.$$
It follows that the cut-off for an approximate $(1-\alpha)$ confidence interval can be obtained as
\[
\delta = \breve{a} \cdot \text{SE}_{\text{total}}^2 \cdot \chi_\alpha =  \left(\breve{a} \cdot \text{SE}_{\text{mc}}^2 + \frac{1}{2} \right) \cdot \chi_\alpha,
\]
where $\chi_\alpha$ is the $(1-\alpha)$ quantile of the $\chi$-square distribution on one degree of freedom.

The estimated Monte Carlo adjusted confidence intervals from the slice and the profile likelihood estimates, indicated by two blue and red vertical lines in Figure~\ref{fig:Lorenz_sl_pr_d50}, were given by $(7.90,7.99)$ and $(7.85,8.01)$ respectively.
The upper ends of both confidence intervals were located near the true value of $F{\,=\,}8$.
We remark that the log likelihood estimates with known $\sigma_p$ and $\sigma_m$ dropped rapidly to around $-4.7\times 10^4$ at $F{\,=\,}10$,
and for this reason the log likelihood estimates at this value of $F$ was excluded from fitting a locally quadratic slice likelihood curve to compute the Monte Carlo adjusted confidence interval.
In contrast, the profile likelihood estimates at $F{\,=\,}10$ did not drop suddenly, thanks to the inflated Monte Carlo MLE for the process noise $\sigma_p$ (Figure~\ref{fig:Lorenz_MCMLE_d50}).
Inaccurate values of the forcing constant $F$ were compensated by the process noise estimates larger than the truth.
The Monte Carlo MLE for the process noise tended to increase as the value of $F$ deviated from the truth.

\subsubsection{Coupled spatiotemporal measles epidemics model}\label{sec:measles}
Spatiotemporal inference for epidemiological and ecological systems is arguably the last remaining open problem from the six challenges in time series analysis of nonlinear systems posed by \citet{bjornstad01}. 
Plug-and-play SMC techniques have been central to solving the other five challenges of \citet{bjornstad01}, all of which can be represented in the framework of inference for low-dimensional nonlinear non-Gaussian POMP models. 
Population dynamics of ecological and epidemiological systems can exhibit highly nonlinear stochastic behavior, leading to computational challenges even in low dimensions. 
Likelihood maximization via iterated filtering has emerged as a practical inference tool for such systems \citep[e.g.,][]{blackwood13b,blake14,bakker16,becker16,ranjeva17,pons-salort18}.

We demonstrate that the GIRF methodology can enable likelihood-based inference on a spatiotemporal mechanistic model addressing a scientific application.
We studied the epidemic dynamics of measles, which is well understood compared to other infectious diseases and is characterized by patterns that are closely replicable using a mechanistic model.
The study of measles has motivated previous statistical methodology for spatiotemporal population dynamics based on a log-linearization as in \citet{xia2004measles} and other approximations as in \citet{eggo2010spatial}, but full likelihood-based fitting using spatially coupled versions of city-level measles transmission models has not previously been carried out.  
We built on the model of \citet{he2009plug}, adding spatial interaction between multiple cities.
We implemented our algorithms with the parameter estimation approach described in Section~\ref{sec:IF2} to make inference on the spatial coupling parameter.
We used the data collated and studied by \citet{dalziel2016persistent}.
The data consisted of biweekly reported case counts in the prevaccination era from year 1949 to 1964 for forty largest cities in England and Wales.
Likelihood-based inference for the nonlinear coupled stochastic dynamics of infectious disease in forty cities has not previously been demonstrated and opens the possiblity of various scientific investigations in epidemiological systems and beyond.

The model compartmentalized the population of each city into susceptible ($S$), exposed ($E$), infectious ($I$), and recovered/removed ($R$) categories.
Their sizes for the $k$-city were denoted by $S_k$, $E_k$, $I_k$, and $R_k$.
The population dynamics was described by the following set of stochastic differential equations:
\[
\begin{array}{l}
\displaystyle \diff S_k(t) = r_k(t) \diff t - \diff N_{SE,k}(t) - \mu S_k(t) \diff t \vspace{0.5ex}\\
\displaystyle \diff E_k(t) = \diff N_{SE,k}(t) - \diff N_{EI,k}(t) - \mu E_k(t) \diff t \vspace{0.5ex}\\
\displaystyle \diff I_k(t) = \diff N_{EI,k}(t) - \diff N_{IR,k}(t) - \mu I_k(t) \diff t
\end{array}\qquad\quad k \in 1 \col d.
\]
Here, $N_{SE,k}(t)$, $N_{EI,k}(t)$, and $N_{IR,k}(t)$ denote the cumulative number of transitions between the corresponding compartments up to time $t$ in city $k$, $\mu$ denotes per-capita mortality rate, and $r_k$ the recruitment rate of susceptible population.
The total population $P_k(t)$ was assumed known and we let $R_k(t) = P_k(t) - S_k(t) - E_k(t) - I_k(t)$.
The cumulative transitions were modelled as counting processes with overdispersion relative to Poisson processes, following the construction of \citet{breto2009time}.
The term $N_{SE,k}(t)$, representing the cumulative number of infections in the $k$-th city, has the expected increment of 
\begin{equation*}\mathbb{E} \left[ N_{SE,k}(t+\diff t) - N_{SE,k}(t) \right] = \beta(t) \cdot S_k(t) \cdot \left[ \left( \frac{I_k}{P_k} \right)^\alpha + \sum_{l\neq k} \frac{v_{kl}}{P_k} \left\{ \left(\frac{I_l}{P_l}\right)^\alpha - \left(\frac{I_k}{P_k}\right)^\alpha\right\}\right] \diff t + o(\diff t),\label{eqn:transmissionrate}\end{equation*}
where $\beta(t)$ denotes the seasonal transmission coefficient and $\alpha$ the mixing exponent \citep{he2009plug}.
The population of city $k$ was denoted by $P_k$, and the number of travelers from city $k$ to $l$ by $v_{kl}$. 
We used the gravity model of \citet{xia2004measles} that describes the number of travelers by
\begin{equation}\label{eqn:gravity}
v_{kl} = G \cdot \frac{\bar{d}}{\bar{P}^2} \cdot \frac{P_k \cdot P_l}{d_{kl}},
\end{equation}
where $d_{kl}$ denotes the distance between city $k$ and city $l$. 
The gravitation constant $G$ in \eqref{eqn:gravity} was scaled with respect to the average population of all forty cities $\bar{P}$ and their average distance $\bar{d}$.
The data consisted of the biweekly reported case numbers in each city.
The model assumed that a certain fraction $\rho_k$, called the reporting probability, of the transitions from the infectious compartment to the recovered compartment were, on average, counted as reported cases. 
The measurement model was chosen to allow for overdispersion relative to the binomial distribution with success probability $\rho_k$. 
More details on the model and the inference procedure are given in the supplementary text~\ref{sec:modeldetail}.

\begin{figure}[t]
  \centering
  \includegraphics[width=0.55\textwidth]{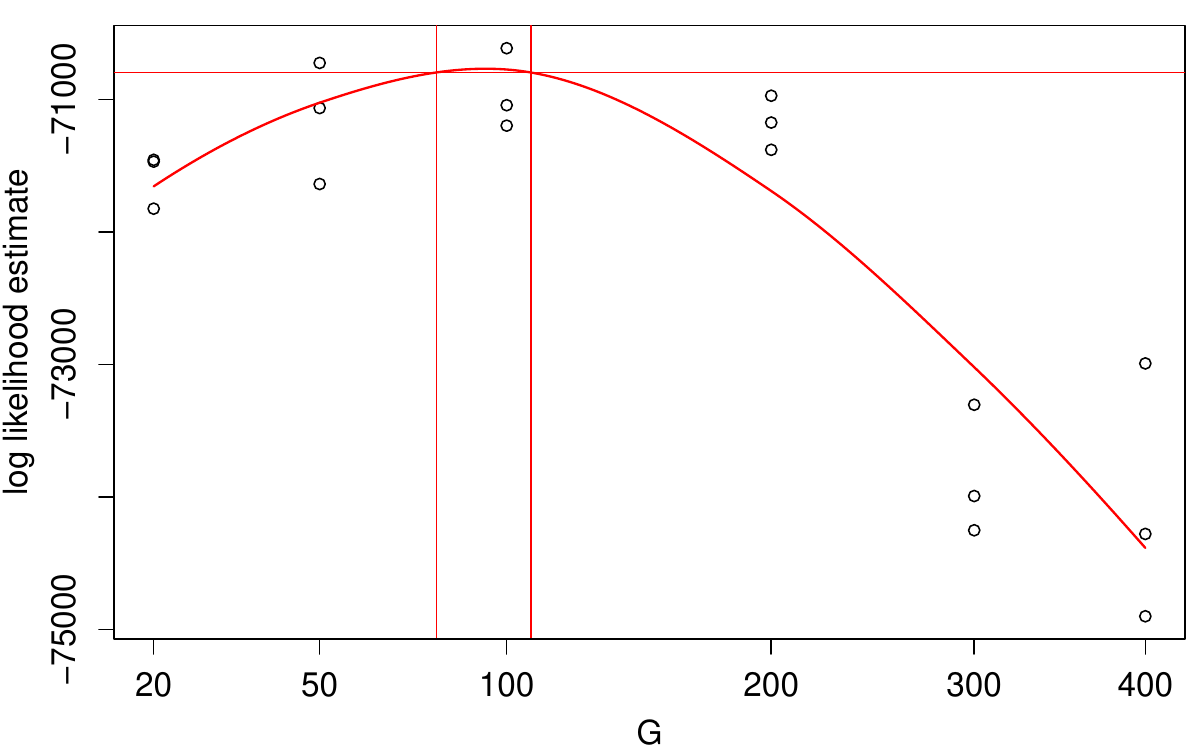}
  \caption{Estimated profile likelihood points for various values of $G$ in our spatiotemporal measles dynamics model and the estimated approximate $95\%$ confidence interval (between red vertical lines).}
  \label{fig:prof_lik_G}
\end{figure}

We made inference on the gravitation constant $G$, based on an estimated profile likelihood curve.
Ability to infer about the spatial coupling parameter $G$ implies that the filter can recover the full joint distribution for all spatial locations.
We fixed $G$ at various levels and estimated other parameters using Algorithm~\ref{alg:iGIRF}.
The reporting probabilities $\rho_k$ were estimated by dividing the total case reports by the total births for the corresponding periods in each city, due to the modelling assumption that individuals who once contracted to measles attain lifelong immunity.
The estimated $\rho_k$ closely matched the values estimated in \citet{he2009plug} separately for each city using a mechanistic model.
We evaluated the guide function $\psi_t$ using the approach described in equations \eqref{eqn:lookahead_u}--\eqref{eqn:fcvar_linapprox} in Section~\ref{sec:guide} to approximate the forecast likelihood of $\nl\sr=3$ future data points.
The forecast variability was estimated by making forty random forecasts at every first intermediate time point after observation time (i.e., $t_{n,1}$).

All parameters except $G$ and $\rho_k$ were estimated using iGIRF (Algorithm~\ref{alg:iGIRF}).
The IVPs and the regular parameters were estimated alternatingly.
For IVP estimation we only used the first three data points, because the information about the initial states was concentrated on the early data points.
We iterated fifty times the filtering over the three data points using fifty particle islands comprising sixty particles each.
Since the IVPs were only perturbed at the start of each filtering, the particle swarm tended to quickly collapse to a single point.
Using many particle islands helped maintain diversity among particles.
The regular parameters were estimated by filtering through the whole data once starting from the estimated IVP values.
Five islands of six hundred particles each were used for regular parameter estimation.
The estimation of IVPs and regular parameters in total took about thirty hours on average using 5 cores.
We iterated the alternating estimation ten times.
The parameter perturbation size decreased at a geometric factor of 0.92 for each subsequent iteration.
The mean of the final swarm of regular parameters was taken as the Monte Carlo MLE.
We estimated the IVP corresponding to the Monte Carlo MLE and estimated the likelihood of data using Algorithm~\ref{alg:GIRF} with ten islands of one thousand particles each.
The obtained likelihood estimate was considered a Monte Carlo profile likelihood for the specified $G$ value.
We independently repeated the estimation of profile likelihood six times for each value of $G$.

We constructed an approximate 95$\%$ confidence interval based on the obtained profile likelihood estimates.
We used three points of highest profile likelihood estimates among six points for each value of $G$.
Figure~\ref{fig:prof_lik_G} shows the estimates of profile log likelihoods and the approximate 95$\%$ confidence interval for $G$.
The procedure for obtaining the Monte Carlo adjusted confidence interval was carried out on a transformed scale of $\sqrt{G}$ for a better quadratic fit.
The approximate confidence interval was found to be $(79,108)$, indicated by two vertical lines, using a Monte Carlo adjusted profile cut-off of 35.1 log units.
All experiments in Section~\ref{sec:Lorenz} and \ref{sec:measles} were conducted on the Olympus cluster at the Pittsburgh Supercomputing Center.

\section{Discussion}\label{sec:discussion}

Our guided intermediate resampling filter (GIRF) approach enables likelihood-based inference on relatively high-dimensional, nonlinear, implicitly defined dynamic models.
Alternative approaches based on information reduction, such as approximate Bayesian computation (ABC), can fail to capture full complexities in the model or result in inaccurate parameter estimates \citep{fasiolo2016comparison}.
There is also a risk of subconscious bias when the scientist's expert knowledge is used to select criteria used to fit a model.
In comparison, inference based on the likelihood of data can add to the reliability of scientific conclusions, since the likelihood of data, uniquely defined by the model, provides a common measure of fit.
In addition, the statistical efficiency of likelihood-based inference leads to inferences that might be unobtainable for methods requiring information reduction.

Our intermediate propagation and resampling approach can be used when the transition density of the latent process is not evaluable, provided that the latent process is defined in continuous time and a simulator for the process is available.
Empirically, we have demonstrated that GIRF can scale up to dimensions substantially larger than the capabilities of alternative algorithms such as the APF or a $\nl$-lookahead filter.
GIRF can be successfully applied to highly nonlinear models for which the ensemble Kalman filter fails.
We also showed that the method enables inference on a scientifically challenging spatiotemporal epidemiological model.
Further potential applications may be found in areas such as ecology, behavioral sciences, or epidemiology, when the data are collected at linked spatial locations or structured into many categories.
An \textsf{R} package \pkg{spatPomp} (a pre-release version available at \url{https://github.com/kidusasfaw/spatPomp}) provides a general realization of spatiotemporal POMP models, where the user can define a model by specifying the latent and the measurement processes and analyze data using the GIRF and other algorithms. 
Many scientific and statistical challenges remain involving analysis of partially observed, highly nonlinear, coupled stochastic systems, and we have shown that the GIRF approach can provide a framework for progress in this enterprise.

{\footnotesize
\paragraph{Acknowledgement}
The authors thank Aaron King for the discussions motivating this research and for insightful feedback. 
Comments on the manuscript by Kidus Asfaw and Yves Atchad\'e have led to improvements. 
This work was supported by National Science Foundation grants DMS-1308919, DMS-1761603, and DMS-1513040, and National Institutes of Health grants 1-U54-GM111274 and 1-U01-GM110712.
}

\begin{appendix}
\footnotesize

\end{appendix}

\footnotesize


\end{document}